\documentclass[review]{elsarticle}

\usepackage{hyperref}

\usepackage{amsthm}		
\usepackage{amsmath}    
\usepackage{amssymb}

\usepackage{algorithmic}

\journal{Hal Archives-ouvertes.}

\bibliographystyle{elsarticle-num}

\newcommand{\Nullspace}{\bar{N}}

\newcommand{\rank}{\eta}
\newcommand{\nullity}{\bar\eta}
\newcommand{\substit}{\sigma}
\newcommand{\Null}{\varnothing}
\newcommand{\SAT}{{\tt{SAT}}}
\newcommand{\CNFSAT}{{\tt{cnf-SAT}}}
\newcommand{\KCNFSAT}{{\tt{k-cnf-SAT}}}
\newcommand{\CNFthreeSAT}{{\tt{3-cnf-SAT}}}

\newcommand{\onethreeSAT}{{\tt{1-3-SAT}}}
\newcommand{\onethreeSATplus}{{\tt{1-3-SAT$^{+}$}}}

\newcommand{\onekaySAT}{{\tt{1-K-SAT}}}

\newcommand{\ksubset}{\tt{0-1-IP$^=$}}

\newcommand{\Poly}{{\tt{P}}}
\newcommand{\NP}{{\tt{NP}}}
\newcommand{\coNP}{\tt{coNP}} 
\newcommand{\NPC}{{\tt{NP-complete}}}

\newcommand{\bigO}{\mathcal{O}}

\newcommand{\polyred}{\leq_{poly}}

\newcommand{\gauss}{\mathtt{GJE}}

\newcommand{\formula}{\varphi}
\newcommand{\formulas}{\Phi}

\newcommand{\unsat}{\bar\Sigma}
\newcommand{\sat}{\Sigma}

\newcommand{\degr}{\eta}
\newcommand{\freed}{\bar{\eta}}

\newcommand{\fract}{\kappa}



\newcommand{\K}{k}

\newcommand{\R}{r}

\newcommand{\system}{\mathtt{Sys}}



\newcommand{\Rationals}{\mathbb{Q}}
\newcommand{\Naturals}{\mathbb{N}}

\newtheorem{definition}{Definition}[section]
\newtheorem{theorem}{Theorem}[section]
\newtheorem{proposition}{Proposition}[section]

\newtheorem{remark}{Remark}[section]

\newtheorem{lemma}{Lemma}[section]

\newtheorem{corollary}{Corollary}[section]

\newtheorem{example}{Example}[section]

\begin{document}

\begin{frontmatter}

\title{
Positive 1-in-3-SAT admits a
non-trivial Kernel.
}
\author[mymainaddress]{Valentin Bura}

\address[mymainaddress]{
valentin.bura@gmail.com
}

\begin{abstract}
We illustrate the strength of
Algebraic Methods, adapting 
Gaussian Elimination and Substitution 
to the problem of
Exact Boolean Satisfiability.

\vfill 
\noindent 
For 1-in-3 SAT with non-negated literals
we are able to obtain 
considerably smaller equivalent 
instances of 0/1 Integer Programming
restricted to Equality.

\vfill
\noindent
Both Gaussian Elimination and 
Substitution may be used in a processing
step, followed by a type of brute-force
approach on the kernel thus obtained.

\vfill
\noindent
Our method shows that Positive instances
of 1-in-3 SAT 
may be reduced to significantly smaller
instances of I.P.E.
in the following sense. Any such instance
of $|V|$ variables and $|C|$ clauses
can be polynomial-time reduced to an
instance of 0/1 Integer Programming with
Equality, of size at most
$2/3|V|$ variables and at most 
$|C|$ clauses.

\vfill
\noindent
We obtain an
upper bound for the
complexity of counting,
$\bigO(2\fract\R 2^{(1-\fract)\R})$
for number of variables $r$ and
clauses to variables ratio $\fract$.

\vfill 
\noindent 
We proceed to define 
formally the notion of 
a non-trivial kernel,
defining the problems considered
as Constraint Satisfaction Problems. 

\vfill 
\noindent 
We conclude showing the 
methods presented here,
giving a non-trivial kernel for
positive 1-in-3 SAT, imply the existence
of a non-trivial kernel for
1-in-3 SAT.

\vfill

\newpage

\vfill
\noindent
Nous illustrons la force de
M\'{e}thodes alg\'{e}briques, adaptation
\'{E}limination et Substitution Gaussiennes
au probl\`{e}me de
Satisfabilit\'{e} bool\'{e}enne exacte.

\vfill
\noindent
Pour 1-sur-3-SAT avec des litt\'{e}raux 
non invers\'{e}s
nous pouvons obtenir
\'{e}quivalent consid\'{e}rablement plus petit
instances de programmation 
d'entiers 0/1
limit\'{e} \`{a} l'\'{e}galit\'{e} uniquement.

\vfill
\noindent
Élimination gaussienne et
La substitution peut être utilisée dans un traitement
étape, suivie d'un type de force brute
approche sur le noyau ainsi obtenu.

\vfill
\noindent
Notre méthode montre que les instances positives
de SAT 1-sur-3
peut être réduit à beaucoup plus petit
instances de I.P.E.
dans le sens suivant. 

\vfill
\noindent
Un tel cas
des variables $|V|$ et des clauses $|C|$
peut être réduit en temps polynomial à un
instance de Programmation Entier 0/1 avec
Égalité, de taille au plus
$2/3|V|$ variables et au plus
Clauses $|C|$.

\vfill
\noindent
Nous obtenons un
borne pour le
complexité du comptage,
$\bigO(2\fract\R 2^{(1-\fract)\R})$
pour le nombre de variables $r$ et
le rapport clauses/variables $\fract$.

\vfill
\noindent
Nous procédons à définir
formellement la notion de
un noyau non trivial,
définir les problèmes considérés
comme problèmes de satisfaction des contraintes.

\vfill
\noindent
Nous concluons en montrant le
méthodes présentées ici,
donner un noyau non trivial pour
SAT positif 1-sur-3, implique l'existence
d'un noyau non trivial pour
1-sur-3-SAT.

\vfill

\vfill\vfill
\vfill\vfill
\end{abstract}
\end{frontmatter}

\noindent
\textbf{Keywords.}
\textit{
Computational Complexity, Boolean Satisfiability,
Kernelization}.

\noindent
\textbf{
Mots clés.}
\textit{
Complexité informatique, Satisfaction booléenne,
Kernelization}.

\section{Introduction}

\noindent 
Recall that {\SAT}     
and its restrictions
{\CNFSAT},  
{\KCNFSAT}  and  {\CNFthreeSAT} 
are {\NPC} 
as shown in
\cite{cook71, karp72, levin73}.
The {\onethreeSAT} problem is 
that,
given a collection of triples over some variables,
to determine whether there exists a truth assignment 
to the variables so that each triple 
contains exactly one 
true literal and exactly two false literals. 

\vfill
\noindent
Schaefer's reduction given in \cite{schaefer78}
transforms
an instance of {\CNFthreeSAT} 
into a 
{\onethreeSAT} instance.
A simple truth-table argument shows
this reduction to be parsimonious,
hence {\onethreeSAT} is complete for the class
{\#\Poly} while 
a parsimonious reduction from 
{\onethreeSAT}
also shows {\onethreeSATplus} 
to be 
complete for {\#\Poly}.
We mention Toda's result 
in \cite{toda91} implying that 
P.H. is as hard computationally as
counting, which in a sense is our 
preoccupation here. 
A related result of
Valiant and Vazirani \cite{valiant85}
implies that detecting unique solutions
is as hard as NP. 
The algorithm we 
present does count solutions 
exhaustively, making use of preprocessing
and brute-force on the resulting kernel.

\vfill
\noindent
The {\onekaySAT} 
problem, a generalization
of {\onethreeSAT}, is that,
given a collection of 
tuples of size $K$ 
over some variables,
to determine whether there 
exists a truth assignment 
to the variables so that each 
$K$-tuple contains exactly one 
true 
and $K - 1$ false
literals.

\vfill
\noindent
The
{\onekaySAT} problem 
has been studied before
under the name of 
XSAT. 
In \cite{dahllof04} very strong 
upper bounds are given
for this problem, including the 
counting version.
These bounds are 
$\bigO(1.1907^{|V|})$ and 
$\bigO(1.2190^{|V|})$ respectively, while
in \cite{bjorklund08} the same bound of
$2^{|C|} |V|^{\bigO(1)}$ is given for both 
decision and counting, 
where $|V|$ is the number of
variables and $|C|$ the number of clauses.

\vfill
\noindent
Gaussian Elimination was used before in the context
of boolean satisfiability.
In \cite{soos10} the author uses this method for handling
xor types of constraints. 
Other recent examples of Gaussian elimination
used in exact algorithms or kernelization may be
indeed found in the literature
\cite{wahlstrom13, giannopoulou16}.

\newpage
\vfill
\noindent
Hence the idea that 
constraints of the type implying
this type of 
exclusivity can be formulated in terms
of equations, and therefore processed 
using Gaussian Elimination,
is not new and the intuition behind it is very
straightforward.

\vfill
\noindent
We mention the influential paper by 
Dell and Van Melkebeek
\cite{dell14} together with a continuation of their
study by Jansen and Pieterse \cite{jansen16,jansen17}.
It is shown in these papers that, under
the assumption that $\coNP\nsubseteq\NP\setminus \Poly$,
there cannot exist a significantly small kernelization 
of various problems, of which exact satisfiability is one. 
We shall use these results directly in our
current approach.

\vfill
\noindent
We begin our investigation by
showing how a {\onethreeSATplus}
instance can be turned into an
integer programming version
{\ksubset} instance with fewer 
variables. The number of variables in the
{\ksubset} instance is at most 
two-thirds of the
number of variables in the
{\onethreeSATplus} instance.
We achieve this by a straightforward
preprocessing of the 
{\onethreeSATplus} instance using 
Gauss-Jordan elimination.

\vfill
\noindent
We are then able to
count the solutions of
the {\onethreeSATplus} instance
by performing a brute-force
search on the {\ksubset}
instance.
This method 
gives interesting upper bounds 
on {\onethreeSATplus},
and the associated 
counting problem, though 
without a further analysis,
the bounds
thus obtained may not be the strongest
upper bounds found in the literature for 
these problems. 

\vfill
\noindent
Our method
shows how instances become easier
to solve with variation in 
clauses-to-variables ratio.
For random {\KCNFSAT} 
the ratio of clauses to variables
has been studied intensively, 
for example \cite{ding15} gives the proof
that a formula with density below a certain
threshold is with high probability satisfiable
while above the threshold is unsatisfiable. 

\vfill
\noindent
The ratio plays a similar role in our treatment
of {\onethreeSAT}.
Another important
observation is that in our case this ratio
cannot go below $1/3$ 
up to uniqueness of clauses,
at the expense of polynomial time pre-processing
of the problem instance.

\newpage
\vfill
\noindent
We note that, by reduction from {\CNFthreeSAT}
any instance of {\onethreeSAT} in which
the number of clauses does not exceed
the number of variables is 
{\NPC}. Hence we restrict our attention to these
instances.

\vfill
\noindent
Our preprocessing
induces a certain type of
``order" on the variables, such that
some of the non-satisfying 
assignments can be omitted by 
our solution search.
We therefore manage to dissect 
the {\onekaySAT} 
instance and obtain 
a ``core" of variables on which
the search can be performed.
For a 
treatment of Parameterized
Complexity 
the reader is directed  
to \cite{downey16}.

\vfill
\section{Outline}

\noindent
After a brief consideration of the notation used
in Section 3,
we define in Section 4 the 
problems 
{\onethreeSAT},
{\onethreeSATplus} and the associated counting
problems
{\#\onethreeSAT} and
{\#\onethreeSATplus}.

\vfill
\noindent
We elaborate on the relationship between the
number of clauses and the number of variables
in {\onethreeSATplus}.
We give a proof that {\onethreeSAT} is
$\NPC$ via reduction from {\CNFthreeSAT},
and a proof that {\onethreeSATplus}
is $\NPC$ via reduction from {\onethreeSAT}.

\vfill
\noindent
We conclude by remarking that due to this
chain of reductions, the restriction of 
{\onethreeSATplus} to instances with more 
variables than clauses is also $\NPC$,
since these kind of instances encode the
{\CNFthreeSAT} problem. We hence restrict 
our treatment of {\onethreeSATplus} to these
instances.

\vfill
\noindent
Section 5 presents our method of reducing   
a {\onethreeSATplus} instance to an instance
of 0/1 Integer Programming with Equality only.
This results in a 0/1 I.P.E. instance with
at most two thirds the number of variables
found in the {\onethreeSATplus} instance.

\newpage
\vfill
\noindent
Our method
describes the method sketched 
by Jansen and Pieterse in 
an introductory paragraph of \cite{jansen17}.
Jansen and Pieterse are not
primarily interested in reduction of the number of variables,
only in reduction of number of constraints. They do not 
tackle the associated counting problem.

\vfill
\noindent  
The method consists of encoding a {\onethreeSATplus}
instance into a system of linear equations and 
performing Gaussian Elimination on this system. 

\vfill
\noindent
Linear Algebraic
methods show the resulting matrix can be rearranged
into an $r\times r$ diagonal submatrix of ``independent'' 
columns, where $r$ is the rank of the system, 
to which it is appended a submatrix containing
the rest of the columns and the result column which
correspond roughly to the 0/1 I.P.E. instance
we have in mind.
We further know the values in the independent
submatrix can be scaled to $1$. 

\vfill
\noindent
The most pessimistic scenario
complexity-wise is when the input clauses, or the 
rank of the resulting system, 
is a third the number of variables, $|C| = 1/3|V|$, 
from which we obtain our complexity upper bounds.

\vfill
\noindent
To this case, one may wish to contrast the case of the 
system matrix being full rank, for which 
Gaussian Elimination alone suffices to find 
a solution.        
Further to this, we explain how to solve the 
0/1 I.P.E. problem in order to recover the number 
of solutions to the {\onethreeSATplus} problem.

\vfill
\noindent
Section 6 outlines the method of substitution,
well-known to be 
equivalent to Gaussian Elimination. 
Section 7
gives a worked example of this algorithm.
Section 8 and
Section 9 are concerned with an analysis
of the algorithm complexity, and 
correctness proof, respectively. 

\vfill
\noindent
Section 10 outlines the implications
for Computational Complexity, 
giving an argument that 
the existence of the {\onethreeSATplus}
kernel found in previous
sections implies the existence of a 
non-trivial kernel for the more general

\newpage
\vfill
\section{Notation}

\noindent
We write as usual
$A\polyred B$ to signify 
a polynomial-time reduction.

\vfill
\noindent
We denote boolean variables by
$p_1, p_2,\dots, p_i,\dots$
Denote the true and false constants by 
$\top$ and $\perp$ respectively.
For any SAT formula $\formula$,
write $\sat(\formula)$ if $\formula$
is satisfiable and write
$\unsat(\formula)$ otherwise. 
Reserve the notation $a(p)$ for
a truth assignment to the variable
$p$.

\vfill
\noindent
We write 
$\formulas(\R,\K)$
for the set of
formulas in $3$-CNF
with $\R$ variables
and $\K$ unique clauses.
We also write 
$\formula(V,C)$ to specify concretely
such a formula, where $V,C$
shall denote the sets of 
variables 
and clauses of $\formula$.
We write
$\fract(\formula) = \frac{\K}{\R}$.

\vfill
\noindent
We will make use of the following 
properties of a given map $f$:
\begin{list}{}{}
\item subadditivity:
$f(A + B)\leq f(A) + f(B)$ 

\item scalability:
$f(cA) = c f(A)$ for constant $c$.
\end{list}

\vfill
\noindent
For a given 
tuple $s = (s_1, s_2, \dots, s_n)$
we let $s(m)$ denote the element
$s_m$. 

\vfill
\noindent
For given 
linear constraints 
$L\colon\sum\limits_{i\leq n} d_i x_i = R$ 
for some $n$ and $x_i\in\{0,1\}$,
we let $L[x/L']$ be the result of substituting uniformly the expression
of constraint $L'$ for variable $x$ 
in constraint $L$. This is to be performed in 
restricted circumstances.

\vfill
\section{Exact Satisfiability}

\noindent
One-in-three satisfiability 
arose in late seventies as
an elaboration relating to 
Schaefer's Dichotomy Theorem
\cite{schaefer78}. 
It is proved there 
using certain assumptions 
boolean satisfiability problems 
are either in {\Poly} or they are
{\NPC}.

\vfill
\noindent
The
counting versions of 
satisfiability problems 
were introduced 
in \cite{valiant79} and it is known
in general that
counting is in some sense strictly harder 
than the corresponding decision
problem.

\newpage
\vfill
\noindent
This is due to the fact that,
for example, producing the number
of satisfying assignments of a formula
in $2$-CNF is complete for {\#\Poly},
while the corresponding decision
problem is 
known to be in {\Poly} \cite{valiant79}.
We thus restrict our attention
to {\onethreeSAT}
and more precisely
{\onethreeSATplus} formulas. 

\vfill
\begin{definition}[\onethreeSAT]
{\onethreeSAT} is defined as 
determining whether a formula 
$\formula\in\formulas(\R,\K)$ is satisfiable,
where the formula comprises of
a collection of triples
\[
C = 
\{\{p^{1}_{1}, p^{1}_{2}, p^{1}_{3}\},
\{p^{2}_{1}, p^{2}_{2}, p^{2}_{3}\},\dots,
\{p^{k}_{1}, p^{k}_{2}, p^{k}_{3}\}\}
\]
such that
$p_{1}^i,p_{2}^i,p_{3}^i\in 
V = 
\{p_1, \neg p_1, p_2, \neg p_2,\dots,
p_r, \neg p_r\}\cup\{\perp\}$ 
and
for any clause exactly one
of the literals is allowed 
to be true in an assignment, 
and no clause may contain
repeated literals or a literal
and its negation, and such that
every variable in $V$ appears in at
least one clause.

\vfill
\noindent
In the restricted case that
$p^{i}_{1},p^{i}_{2},p^{i}_{3}\in 
V^+ = 
\{p_1, p_2,\dots,
p_r\}\cup\{\perp\}$ 
for $1\leq i\leq r$
we denote the problem 
as \onethreeSATplus.

\vfill
\noindent
In the extended case that
we are required to 
produce the number of
satisfying
assignments, these problems 
will be denoted as
{\#\onethreeSAT} and 
{\#\onethreeSATplus}. 
\end{definition}

\vfill
\noindent

\begin{example}
The {\onethreeSATplus} formula
$\formula = \{\{p_1, p_2, p_3\}, 
\{p_2, p_3, p_4\}\}$ is satisfiable by
the assignment
$a(p_2) = \top$ 
and $a(p_j) = \perp$ for 
$j = 1,3,4$.
The {\onethreeSATplus} formula
$\formula = \{\{p_1, p_2, p_3\},
\{p_2, p_3, p_4\},
\{p_1, p_2, p_4\},
\{p_1, p_3, p_4\}\}$ 
is not satisfiable.
\end{example}

\vfill
\noindent
\begin{lemma}
Up to uniqueness of clauses 
and variable naming
the set
$\formulas(\R, \R/3)$
determines one
{\onethreeSATplus}
formula and this
formula is trivially satisfiable.
\end{lemma}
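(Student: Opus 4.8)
The plan is to run a slot-counting argument. A formula in $\formulas(\R, \R/3)$ consists of exactly $\R/3$ clauses, each a triple, so it offers exactly $\R$ literal-slots in total. The definition of $\onethreeSATplus$ demands that each of the $\R$ variables $p_1,\dots,p_\R$ occur in at least one clause, while no clause may repeat a literal. A pigeonhole count then pins down the structure completely: if some variable occupied two or more of the $\R$ slots, fewer than $\R$ slots would remain for the rest and some variable would fail to appear. Hence every variable occupies exactly one slot, which simultaneously rules out the constant $\perp$ (there is no free slot left for it) and forces the clauses to be pairwise disjoint. The set of clauses is therefore exactly a partition of $\{p_1,\dots,p_\R\}$ into $\R/3$ blocks of size three.

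The next step is the uniqueness claim. Any two partitions of an $\R$-element set into triples are carried into one another by a permutation of the variables, which is precisely a renaming; reading clauses as unordered sets and a formula as an unordered, duplicate-free set of clauses, all members of $\formulas(\R,\R/3)$ collapse to a single formula under this equivalence, with canonical representative $\{\{p_1,p_2,p_3\},\dots,\{p_{\R-2},p_{\R-1},p_\R\}\}$. Trivial satisfiability is then immediate: I would assign $\top$ to one variable in each triple and $\perp$ to the other two. Because the triples share no variable the assignment is globally consistent, and each clause then holds exactly one true literal, meeting the $\onethreeSATplus$ requirement.

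The main obstacle is not a calculation but a matter of making the statement well-posed: one must first fix the reading of ``up to uniqueness of clauses and variable naming'' — clauses as unordered sets without repetition, formulas modulo a global bijection on variables — before the phrase ``determines one formula'' has content. A secondary subtlety is the status of $\perp$ in the appearance requirement: since $\perp$ is a constant and not one of the variables $p_1,\dots,p_\R$, it need not appear, so excluding it via the slot count does not clash with the requirement that every variable appear.
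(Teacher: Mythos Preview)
Your proof is correct and follows essentially the same counting argument as the paper: with $\R/3$ triples there are exactly $\R$ literal-slots, so the requirement that every one of the $\R$ variables appear forces each variable into exactly one slot and hence the clauses into a partition of the variable set. Your treatment is in fact more explicit than the paper's, which argues by contradiction that two clauses sharing a variable would drop the variable count below $\R$; you additionally spell out why $\perp$ is excluded and why any two such partitions agree up to renaming, points the paper leaves implicit.
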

\begin{proof}
Consider the formula 
$\formula
=
\{\{p_{3i}, p_{3i+1}, p_{3i+2}\}
\mid
1\leq i\leq r/3
\}$ which has $r$ variables
and $r/3$ clauses, hence belongs 
to the set $\formulas(\R, \R/3)$ and
it is satisfiable, trivially, 
by any assignment
that makes each clause evaluate to true.

\vfill
\noindent
Now take any clause
$\{a,b,c\}\in
\formula(V,C)$ with 
$a,b,c \in V$.
We claim there is no
other
clause  
$\{a',b',c'\}
\in
\formula$ such that
$\{a,b,c\}
\cap
\{a',b',c'\}
\neq\emptyset$, for otherwise
let $a$ be in their 
intersection and we can  
see the number of variables
used by the $\R/3$ clauses
reduces by one variable, to be
$\R - 1$.
Now, since the clauses 
of $\formula$ do not overlap
in variables, we can see that
our uniqueness claim must hold,
since the elements of
$\formula$ are partitions of
the set of variables. 
\end{proof}

\vfill
\begin{remark}
For {\onethreeSATplus},
the sets
$\formulas(\R, \K)$
for $\K < \R/3$
are empty.
\end{remark}

\vfill
\noindent
Schaefer gives a polynomial 
time parsimonious reduction from
{\CNFthreeSAT} to {\onethreeSAT}
hence showing that {\onethreeSAT}
and its counting version
{\#\onethreeSAT}
are {\NPC} and
respectively \#\Poly{\tt{-complete}}.

\vfill
\noindent
\begin{proposition}[Schaefer, \cite{schaefer78}]
\label{13satnpc}
{\onethreeSAT} is {\NPC}. 
\end{proposition}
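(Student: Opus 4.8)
The plan is to establish {\onethreeSAT} is {\NPC} via the standard two-step route: membership in {\NP}, followed by a polynomial-time reduction from a known {\NPC} problem, namely {\CNFthreeSAT}. Membership is immediate, since given a candidate assignment $a$ to the variables, one verifies in polynomial time that every triple contains exactly one true literal simply by counting the true literals in each clause. The substance of the proof is the hardness direction, which is Schaefer's reduction \cite{schaefer78}.

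For the reduction I would take an arbitrary {\CNFthreeSAT} formula $\formula$ with clauses $\{\ell_1, \ell_2, \ell_3\}$ and transform each clause locally into a bundle of {\onethreeSAT} triples over the original literals together with fresh auxiliary variables. The key gadget encodes the ordinary disjunction ``at least one of $\ell_1, \ell_2, \ell_3$ is true'' as a conjunction of exactly-one-in-three constraints. Concretely, for each clause one introduces auxiliary variables and a fixed set of triples so that the conjunction of the one-in-three constraints is satisfiable precisely when the original clause is satisfied, with the auxiliary variables forced into the unique values that witness this. The global formula is the union, over all clauses, of these triple-bundles, sharing the variables $p_i$ (and their negations) across clauses exactly as in $\formula$. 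This construction is plainly computable in polynomial time, since it expands each clause by a constant amount.

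The correctness argument splits into the two implications. First I would show that any satisfying assignment of $\formula$ extends to a satisfying assignment of the constructed {\onethreeSAT} instance, by exhibiting the forced values of the auxiliary variables clause by clause. Conversely, given a one-in-three satisfying assignment of the constructed instance, its restriction to the original variables $p_i$ must satisfy $\formula$; this follows because the gadget was engineered so that no assignment making all three of $\ell_1,\ell_2,\ell_3$ false can be completed to satisfy the exactly-one constraints. Handling negated literals requires care: a literal $\neg p_i$ and its variable $p_i$ must receive complementary truth values, which one enforces either by a dedicated triple tying $p_i$, $\neg p_i$ and the constant $\perp$ together, or by treating $\neg p_i$ as a distinct variable constrained against $p_i$.

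The main obstacle will be verifying the gadget's correctness exhaustively, that is, checking that every assignment to the three input literals propagates consistently through the auxiliary triples, and in particular that the all-false input is the only one that cannot be extended to a valid one-in-three assignment. This is the truth-table check alluded to in the introduction; it is routine but must be done carefully to confirm the reduction is not merely polynomial but parsimonious, preserving the number of satisfying assignments so that the counting consequence for {\#\onethreeSAT} also follows. Since the statement to be proved asserts only {\NPC}, I would present the reduction and the two-way correctness and defer the parsimony bookkeeping to the remark already made in the text.
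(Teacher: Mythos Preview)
Your proposal is correct and follows essentially the same route as the paper: a polynomial-time reduction from {\CNFthreeSAT} via a clause-local gadget that introduces fresh auxiliary variables per clause. The paper's proof is terser than yours---it simply exhibits Schaefer's concrete gadget (for $p\vee p'\vee p''$ the triples $\{\neg p,a,b\}$, $\{p',b,c\}$, $\{\neg p'',c,d\}$ with four fresh variables $a,b,c,d$) and counts the resulting variables and clauses, omitting both the membership-in-{\NP} observation and the two-way correctness argument you outline; your concern about enforcing complementarity of $p_i$ and $\neg p_i$ is moot here since the paper's definition of {\onethreeSAT} admits negated literals directly.
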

\begin{proof}
Proof by reduction from 
{\CNFthreeSAT}.
For a clause 
$p\vee p'\vee p''$
create three 
{\onethreeSAT} clauses
$\{\neg p, a, b\}$,
$\{p', b, c\}$,
$\{\neg p'', c, d\}$.
Hence, we obtain
an instance with 
$|V| + 4|C|$ variables
and $3|C|$ clauses,
for the instance of
{\CNFthreeSAT} of 
$|V|$ variables and
$|C|$ clauses.
\end{proof}

\vfill
\noindent
The following statement is given in 
\cite{garey02}.
For the sake of completeness,
we provide a proof by a parsimonious 
reduction 
from \onethreeSAT.

\vfill
\noindent
\begin{proposition}[Garey and Johnson\cite{garey02}]
\label{13satplusnpc}
{\onethreeSATplus} is {\NPC}.
\end{proposition}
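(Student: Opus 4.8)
The plan is to prove {\onethreeSATplus} is {\NPC} by a parsimonious reduction from {\onethreeSAT}, which we already know is {\NPC} by Proposition~\ref{13satnpc}. Since {\onethreeSATplus} is evidently in {\NP} (a satisfying assignment is verifiable in polynomial time by checking each triple has exactly one true literal), the work lies entirely in the hardness direction. The core difficulty is that a general {\onethreeSAT} instance may contain negated literals, whereas {\onethreeSATplus} instances are built only from positive literals drawn from $V^+$. So the task reduces to \emph{eliminating negations} while preserving the exact-one-in-three semantics of every clause.

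First I would introduce, for each variable $p_i$ appearing negated somewhere, a fresh variable $q_i$ intended to represent $\neg p_i$. The key is to force the relationship $a(q_i) = \neg a(p_i)$ using only positive 1-in-3 clauses. A standard gadget is the clause $\{p_i, q_i, z\}$ together with a mechanism pinning $z$ to $\perp$: in the 1-in-3 semantics, $\{p_i, q_i, \perp\}$ forces exactly one of $p_i, q_i$ to be true, which is precisely the constraint $a(q_i) = \neg a(p_i)$. Here I would exploit the constant $\perp$ that the definition explicitly admits into $V^+$, so such a clause is legal in a {\onethreeSATplus} instance. Having introduced these complementation gadgets, I would then rewrite every original clause by replacing each occurrence of a negated literal $\neg p_i$ with the corresponding positive variable $q_i$, leaving positive literals untouched. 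The resulting formula uses only members of $V^+$ and hence is a legitimate {\onethreeSATplus} instance.

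Next I would verify correctness and parsimony together. The map between assignments is a bijection: any satisfying assignment $a$ of the original instance extends uniquely to the new instance by setting $a(q_i) = \neg a(p_i)$, and conversely any satisfying assignment of the new instance must satisfy each complementation gadget, forcing $a(q_i) = \neg a(p_i)$, so it restricts uniquely to a satisfying assignment of the original. Because each original clause and its rewritten image are satisfied by corresponding assignments and the gadget values are determined with no free choice, the correspondence is one-to-one in both directions, giving a \emph{parsimonious} reduction — exactly the strength needed to transfer the {\#\Poly}-completeness of {\#\onethreeSAT} to {\#\onethreeSATplus} as well. I would also note the blowup is polynomial: at most one fresh variable and one extra clause per negated variable, so the instance size grows by a constant factor.

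The main obstacle I anticipate is handling the side conditions in the {\onethreeSATplus} definition cleanly, rather than the core gadget itself. Specifically, the definition forbids a clause from containing repeated literals or both a literal and its negation, and it requires every variable in the variable set to appear in at least one clause; I would need to check that the rewriting does not accidentally create a clause with a repeated variable (which can happen if, say, $p_i$ and $\neg p_i$ were in distinct original clauses that after substitution coincide, or if substitution merges two literals), and that every introduced $q_i$ genuinely occurs. A secondary subtlety is pinning $\perp$ correctly: I must confirm that treating $\perp$ as a permanently-false slot is consistent with the exact-one-true requirement across all clauses simultaneously, so that the gadget clauses and rewritten clauses impose no spurious global constraint. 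These are bookkeeping matters, but they are where a careless reduction would fail to land inside the problem's syntactic restrictions.
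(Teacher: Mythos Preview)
Your proposal is correct, but it takes a different route from the paper's proof. You introduce a \emph{single} fresh complement variable $q_i$ per original variable $p_i$ that is ever negated, together with one global gadget clause $\{p_i, q_i, \perp\}$, and then substitute $q_i$ for every occurrence of $\neg p_i$. The paper instead works \emph{clause by clause}: for each clause it introduces a fresh variable for every negated literal \emph{in that clause} (so a clause with three negations spawns three new variables and four new clauses), and verifies parsimony by a local truth-table check on each clause gadget in isolation. Your construction is more economical---at most $|V|$ fresh variables and $|V|$ extra clauses overall, versus the paper's bound of $3|C|$ fresh variables and $4|C|$ extra clauses---and the correctness argument via the forced bijection $a(q_i)=\neg a(p_i)$ is clean. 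The paper's per-clause version has the minor advantage that the parsimony check is entirely local (one truth table per clause type), while in your version one must observe globally that the gadget values are uniquely determined; but this is exactly what you do, so nothing is lost. The side-condition worries you flag (repeated literals after substitution, every variable occurring) all resolve favorably: since the original instance already forbids a literal and its negation in one clause, substitution cannot collapse two literals, and any $p_i$ that appeared only negated still occurs in its gadget clause $\{p_i, q_i, \perp\}$.
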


\newpage
\vfill
\noindent
\begin{proof}
Construct instance of 
{\onethreeSATplus} from
an instance of {\onethreeSAT}.
Add every clause in the 
{\onethreeSAT} instance with 
no negation to the {\onethreeSATplus}
instance.

\vfill
\noindent
For every clause containing one negation
$\{\neg p, p', p''\}$,
add to the {\onethreeSATplus}
two clauses
$\{\hat{p}, p', p''\}$
and
$\{\hat{p}, p, \perp\}$ where 
$\hat{p}$ is a fresh variable.

\vfill
\noindent
For a clause containing
two negations 
$\{\neg p, \neg p', p''\}$
we add two fresh variables
$\hat{p}, \hat{p}'$
and three clauses
$\{\hat{p}, \hat{p}', p''\}$,
$\{\hat{p}, p, \perp\}$ and
$\{\hat{p}', p', \perp\}$.

\vfill
\noindent
For a clause containing
three negations 
$\{\neg p, \neg p', \neg p''\}$
we add three fresh variables
$\hat{p}, \hat{p}', \hat{p}''$
and four clauses
$\{\hat{p}, \hat{p}', \hat{p}''\}$,
$\{\hat{p}, p, \perp\}$ ,
$\{\hat{p}', p', \perp\}$ and
$\{\hat{p}'', p'', \perp\}$.

\vfill
\noindent
We 
obtain a {\onethreeSATplus} formula
with at most
$4|C|$ more clauses and at most
$|V| + 3|C|$ more variables, for
initial number of clauses and variables
$|C|$ and $|V|$ respectively.

\vfill
\noindent
Our reduction is parsimonious,
for it is verifiable
by truth-table
the number of satisfying assignments 
to the {\onethreeSAT} clause
$\{\neg p, \neg p', \neg p''\}$ 
is the same as the number of
satisfying assignments to the
{\onethreeSATplus} collection of 
clauses
$\{\{\hat{p}, \hat{p}', \hat{p}''\},
\{\hat{p}, p, \perp\},
\{\hat{p}', p', \perp\},
\{\hat{p}'', p'', \perp\}\}$. 
\end{proof}

\vfill
\noindent
\begin{remark}
In virtue of Theorem \ref{13satnpc}
and Theorem \ref{13satplusnpc}
we restrict ourselves to instances
of {\onethreeSATplus}
$\formula\in\formulas(\R,\K)$
with $\R\geq\K$.

\vfill
\noindent
For if an instance 
of {\CNFthreeSAT}
$\bar{\formula}\in\formulas(\R',\K')$
is reduced to an instance
of {\onethreeSAT}
$\hat{\formula}\in\formulas(\R'',\K'')$
then our reduction entails
$\R'' = \R' + 4\K'$ and 
$\K'' = 3\K'$.

\vfill
\noindent
We analyze 
the further reduction to the
instance of {\onethreeSATplus}
$\formula\in\formulas(\R,\K)$.
Let $C, C', C'', C'''$ 
be the collections of
clauses in $\hat{\formula}$
containing, no negation,
one negation, two negations
and three negations respectively.

\vfill
\noindent
Our reduction implies
$\R = \R'' + |C'| + 2|C''| + 3|C'''|$
and
$\K = |C| + 2|C'| + 3|C''| + 4|C'''|
= \K'' + |C'| + 2|C''| + 3|C'''|$.
Then,
$\R - \K = 
\R'' + |C'| + 2|C''| + 3|C'''| -
\K'' - |C'| - 2|C''| - 3|C'''|
 = 
\R'' - \K'' = 
\R' + 4\K' - 3\K' = \R' + \K' > 0
$.
\end{remark}

\newpage
\section{Rank of a Formula}

\vfill
\noindent
A rank function is used as a measure
of ``independence"
for members of a certain set. 
The dual notion of nullity is defined
as the complement of the rank.

\begin{definition}
A rank function $R$ obeys the following 

\begin{list}{}{}
\item 1. $R(\emptyset) = 0$,

\item 2. $R(A\cup B) \leq R(A) + R(B)$,

\item 3. $R(A) \leq R(A\cup\{a\}) \leq R(A) + 1$.
\end{list}
\end{definition}

\vfill
\noindent
\begin{definition}[rank and nullity]
For a {\onethreeSATplus}
formula $\formula(V,C)$ 
define the system of linear equations 
$\system(\formula)$ as follows:

\begin{list}{}{}
\item for any clause $\{p, p', p''\}\in C$
	add to $\system(\formula)$ equation
	$\bar{p}+ \bar{p}' + \bar{p}'' = 1$;
\end{list} 

\vfill
\noindent
Define the rank and nullity of $\formula$ as
$\degr(\formula) = R(\system(\formula))$
and
$\freed(\formula) = |V| - \degr(\formula)$.

\vfill
\noindent
If formula is clear from context, we also 
use the shorthand $\degr$ and $\freed$.
\end{definition}

\vfill
\noindent
\begin{remark}
$\degr$ is a rank function 
with respect to
sets of {\onethreeSAT} triples.
\end{remark}

\vfill
\noindent
\begin{lemma}
For any {\onethreeSATplus} instance
$\formula$
transformed into a linear system
$\system(\formula)$ we observe the
following:

\begin{list}{}{}
\item
$\bar{p} + \bar{p}' + \bar{p}'' = 1$
has a solution $S \subset \{0,1\}^3$ 
if and only if 
exactly one of 
$\bar{p}, \bar{p}', \bar{p}''$ 
is equal to $1$ 
and the other two are
equal to $0$.
\end{list}
\end{lemma}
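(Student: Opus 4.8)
The plan is to verify the biconditional by an elementary finite case analysis over the solution space $\{0,1\}^3$, exploiting the fact that for $0/1$-valued quantities the arithmetic sum simply counts how many of the summands equal $1$. This reduces the entire statement to a counting observation together with a careful statement of the semantic correspondence between the clause and its algebraic image.

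First I would dispatch the easy implication. Suppose exactly one of $\bar{p}, \bar{p}', \bar{p}''$ equals $1$ and the remaining two equal $0$. Then by direct substitution the left-hand side of $\bar{p} + \bar{p}' + \bar{p}'' = 1$ evaluates to $1 + 0 + 0 = 1$, so the triple is a solution and therefore lies in $S$. For the converse, I would note that since each of $\bar{p}, \bar{p}', \bar{p}''$ is constrained to $\{0,1\}$, the value of $\bar{p} + \bar{p}' + \bar{p}''$ equals precisely the number of the three variables assigned $1$. Hence the equation holds exactly when this count is $1$, i.e.\ exactly one variable is $1$ and the other two are $0$. Enumerating $\{0,1\}^3$ confirms that the only triples summing to $1$ are $(1,0,0)$, $(0,1,0)$ and $(0,0,1)$, so $S = \{(1,0,0),(0,1,0),(0,0,1)\}$ and every member has the claimed form.

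There is no genuine obstacle in the arithmetic; the only point requiring care is the semantic bridge between the {\onethreeSATplus} clause and its algebraic encoding. I would make explicit that the construction in the definition of rank and nullity sends each literal $p$ to its indicator $\bar{p}\in\{0,1\}$ with $\bar{p}=1$ precisely when $a(p)=\top$ and $\bar{p}=0$ when $a(p)=\perp$. Under this identification the clausal requirement that \emph{exactly one} literal of $\{p,p',p''\}$ be true corresponds bijectively to the requirement that exactly one of $\bar{p},\bar{p}',\bar{p}''$ equal $1$. Combined with the counting observation of the previous paragraph, this shows the linear equation $\bar{p} + \bar{p}' + \bar{p}'' = 1$ faithfully captures the $1$-in-$3$ constraint, which is exactly the stated equivalence.
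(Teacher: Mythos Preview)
Your argument is correct: the finite case analysis over $\{0,1\}^3$ together with the observation that the sum of $0/1$ variables counts the number of ones is exactly what is needed, and the semantic bridge you spell out between the clause constraint and its arithmetic encoding is the right justification. The paper in fact states this lemma without proof, treating it as immediate; your write-up is therefore more explicit than the original, but the underlying idea---that over $\{0,1\}$ the equation $\bar p+\bar p'+\bar p''=1$ is equivalent to exactly one summand being $1$---is the same trivial verification the paper is tacitly invoking.
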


\newpage
\vfill
\noindent
\begin{proposition}
For any formula $\formula\in\formulas(\R,\K)$
we have $\sat(\formula)$
if and only if
$\system(\formula)$
has at least one solution over
$\{0,1\}^{\R}$.
\end{proposition}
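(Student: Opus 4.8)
The plan is to prove both directions of the biconditional by exhibiting a direct correspondence between satisfying truth assignments of $\formula$ and $\{0,1\}$-solutions of the linear system $\system(\formula)$. The key observation, supplied by the preceding lemma, is that a single equation $\bar{p}+\bar{p}'+\bar{p}''=1$ over $\{0,1\}^3$ is satisfied exactly when precisely one of the three variables equals $1$ and the other two equal $0$. This is precisely the ``one-in-three'' condition defining a satisfying assignment of a clause. So the entire argument rests on translating between the truth-value of a literal $p$ and the $\{0,1\}$-value of the corresponding system variable $\bar{p}$.

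First I would fix the encoding explicitly: for a truth assignment $a$ to the variables of $\formula$, associate the assignment $\bar{p}=1$ when $a(p)=\top$ and $\bar{p}=0$ when $a(p)=\perp$, together with the convention $\bar\perp = 0$ for the constant false literal. This gives a bijection between truth assignments over $V^{+}$ and points of $\{0,1\}^{\R}$. For the forward direction, suppose $\sat(\formula)$, witnessed by an assignment $a$ making each clause contain exactly one true literal. Fix an arbitrary clause $\{p,p',p''\}\in C$; since exactly one of its literals is true under $a$, the corresponding encoded values satisfy $\bar{p}+\bar{p}'+\bar{p}''=1$ by the lemma. As the equation associated to that clause is exactly this equation, the encoded point satisfies every equation of $\system(\formula)$, so the system has a solution in $\{0,1\}^{\R}$.

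For the converse, suppose $\system(\formula)$ has a solution $s\in\{0,1\}^{\R}$. Decode $s$ into a truth assignment $a$ by reversing the correspondence above, setting $a(p)=\top$ iff the $s$-value of $\bar{p}$ is $1$. By the lemma, each equation $\bar{p}+\bar{p}'+\bar{p}''=1$ being satisfied forces exactly one of its three variables to take value $1$; decoding, this means the originating clause $\{p,p',p''\}$ has exactly one true literal under $a$. Since $\system(\formula)$ contains exactly one equation per clause of $C$, every clause of $\formula$ is one-in-three satisfied by $a$, hence $\sat(\formula)$.

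I do not expect a genuine obstacle here, since the construction of $\system(\formula)$ is designed to make each equation a faithful algebraic transcription of its clause, and the lemma already discharges the only nontrivial content by characterising the $\{0,1\}$-solutions of a single equation. The one point requiring care is the treatment of the constant literal $\perp$: I would verify that fixing $\bar\perp=0$ is consistent across all equations and does not spuriously add or remove solutions, so that the bijection between assignments and $\{0,1\}^{\R}$ points genuinely restricts to a bijection between \emph{satisfying} assignments and \emph{system} solutions. Once that bookkeeping is confirmed, the equivalence follows clause-by-clause.
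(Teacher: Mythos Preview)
Your proposal is correct and follows exactly the route the paper intends: the proposition is stated in the paper without any proof environment, being treated as immediate from the preceding lemma characterising the $\{0,1\}$-solutions of a single equation $\bar p+\bar p'+\bar p''=1$. You have simply spelled out the clause-by-clause correspondence that the paper leaves implicit, including the bookkeeping for the constant literal $\perp$, so there is nothing to add.
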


\vfill
\noindent
\begin{corollary}
\label{corsys}
A formula $\formula\in\formulas(\R,\K)$ has as many
satisfiability assignments as the
number of solutions of
$\system(\formula)$
over
$\{0,1\}^{\R}$.
\end{corollary}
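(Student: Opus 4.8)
The plan is to upgrade the preceding Proposition from a mere existence equivalence to a counting equivalence by exhibiting an explicit bijection between the set of satisfying assignments of $\formula$ and the set of solutions of $\system(\formula)$ over $\{0,1\}^{\R}$. First I would fix the natural correspondence $\Psi$ that sends a truth assignment $a$ of the variables $\{p_1,\dots,p_{\R}\}$ to the vector $x\in\{0,1\}^{\R}$ whose $i$-th coordinate is $1$ when $a(p_i)=\top$ and $0$ when $a(p_i)=\perp$, with the constant $\perp$ always receiving value $0$. Since this assignment-to-vector map is simply the $\R$-fold product of the bijection between $\{\top,\perp\}$ and $\{1,0\}$, it is immediately a bijection between all truth assignments and all of $\{0,1\}^{\R}$; the real content of the Corollary is that $\Psi$ carries satisfying assignments exactly onto solutions of the system.

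Second, I would argue that $\Psi$ respects satisfaction one clause at a time. For a clause $\{p,p',p''\}\in C$ the associated equation in $\system(\formula)$ is $\bar p+\bar p'+\bar p''=1$, and by the preceding Lemma this equation holds over $\{0,1\}$ precisely when exactly one of the three values equals $1$ and the other two equal $0$. Because $\formula$ is a positive instance, the values $\bar p,\bar p',\bar p''$ are exactly the coordinates of $\Psi(a)$ indexed by $p,p',p''$, so the equation is satisfied by $\Psi(a)$ if and only if exactly one literal of the clause is true under $a$, which is exactly the one-in-three condition for that clause.

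Third, since an assignment $a$ satisfies $\formula$ if and only if it satisfies every clause simultaneously, and $\Psi(a)$ solves $\system(\formula)$ if and only if it satisfies every equation simultaneously, the clausewise equivalence of the previous step lifts to the global statement that $a$ satisfies $\formula$ if and only if $\Psi(a)$ is a solution of $\system(\formula)$. Hence $\Psi$ restricts to a bijection between the satisfying assignments of $\formula$ and the solutions of $\system(\formula)$ in $\{0,1\}^{\R}$, and the two sets have equal cardinality, which is the claim.

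I would expect no deep obstacle here, as the argument is essentially the proof of the preceding Proposition made bijective rather than merely existential. The only point demanding care is the bookkeeping around the constant $\perp$ and the purely positive literal structure, ensuring that the coordinates of $\Psi(a)$ coincide with the values $\bar p$ appearing in the equations so that the Lemma applies verbatim on each clause.
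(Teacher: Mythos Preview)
Your argument is correct and is precisely the natural bijection the paper leaves implicit: the paper states the Corollary without proof, treating it as immediate from the preceding Lemma and Proposition, and your explicit map $\Psi$ together with the clausewise use of the Lemma is exactly how one makes that immediacy rigorous. There is nothing to add or amend.
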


\vfill
\noindent
We define the binary integer programming
problem with equality
here and show briefly 
that
{\onethreeSATplus} is reducible to 
a ``smaller'' instance of this 
problem. 

\vfill
\noindent
\begin{definition}[$0,1$-integer programming with equality]
The {\ksubset} problem is defined as
follows.
Given a family of finite tuples 
$s_1,s_2,\dots, s_k$
with each $s_i\in\Rationals^S$
for some fixed $S\in\Naturals$,
and given a sequence 
$q_1, q_2,\dots, q_k\in\Rationals$,
decide whether there exists 
a tuple 
$T\in\{0,1\}^S$ such that
\[\sum\limits_{j=1}^{S} s_i(j) T(j) = q_i
\text{ for each }
i\in\{1,2,\dots,k\}
\]  
\end{definition}

\vfill
\noindent
\begin{remark}
{\ksubset}$\in\bigO(k  2^S)$ 
where $k$ is the number of {\ksubset} tuples and
$S$ is the size of the tuples.  
\end{remark}

\vfill
\noindent
\begin{proof}
The bound is obtained 
through applying an exhaustive
search.
\end{proof}

\vfill
\noindent
\begin{lemma}
\label{degrfreed1}
Let $\formula\in\formulas(\R,\K)$
be a {\onethreeSATplus} formula, 
then
$\degr(\formula)
\leq\K$ and
$\freed(\formula)\geq 
\R - \K$.
\end{lemma}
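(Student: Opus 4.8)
The plan is to read off both bounds directly from the abstract rank-function axioms together with the observation that $\system(\formula)$ contributes exactly one equation per clause. First I would note that since $\formula\in\formulas(\R,\K)$ has $\K$ unique clauses, the system $\system(\formula)$ is a collection of precisely $\K$ linear equations; writing the clauses as $c_1,\dots,c_\K$, the quantity $\degr(\formula)=R(\system(\formula))$ is simply the value of the rank function $R$ on this $\K$-element generating set. The emphasis on uniqueness of clauses ensures the cardinality of this set is exactly $\K$ and not larger.

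The key step is a short induction on the number of clauses, driven by axiom~3, the unit-increment property $R(A)\leq R(A\cup\{a\})\leq R(A)+1$. Starting from $R(\emptyset)=0$ (axiom~1) and adjoining the clauses one at a time, each adjunction raises the rank by at most one, so that
\[
R(\{c_1,\dots,c_\K\})\leq R(\{c_1,\dots,c_{\K-1}\})+1\leq\cdots\leq R(\emptyset)+\K=\K.
\]
This establishes $\degr(\formula)\leq\K$.

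For the nullity bound I would simply substitute into the definition $\freed(\formula)=|V|-\degr(\formula)$. Since the formula has $\R$ variables we have $|V|=\R$, and combining with the rank bound just obtained yields $\freed(\formula)=\R-\degr(\formula)\geq\R-\K$.

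I expect no genuine obstacle here, as the statement is an immediate consequence of the rank-function axioms and the definition of nullity. The only point requiring care is to make the inductive counting over clauses explicit and to justify that $\system(\formula)$ assigns exactly one equation per clause, so that the generating set has cardinality $\K$; everything else is a direct substitution.
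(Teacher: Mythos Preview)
Your proposal is correct and follows the same approach as the paper. The paper's proof is the single line ``Follows from the observation that $\degr$ is a rank function,'' and what you have written is precisely the unpacking of that observation: axiom~1 and the unit-increment axiom~3 together bound $R$ on a $\K$-element set by $\K$, and the nullity bound is then immediate from the definition $\freed(\formula)=|V|-\degr(\formula)$.
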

\begin{proof}
Follows from
the observation that $\degr$ is 
a rank function.
\end{proof}

\newpage
\begin{lemma}
\label{lemmaalgor}
Consider a 
{\onethreeSATplus}  
formula
$\formula$ and suppose
$\degr(\formula) = k$ and
$\freed(\formula) = r - k$.
The satisfiability of $\formula$
is decidable in 
$\bigO(2k 2^{r-k})$.
\end{lemma}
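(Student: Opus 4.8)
The plan is to use the low rank of the associated linear system to confine the satisfiability test to a small brute-force search. By Corollary~\ref{corsys} it suffices to decide whether $\system(\formula)$ has a solution in $\{0,1\}^{\R}$, so the whole argument can be phrased in terms of the $0/1$ solutions of the linear system rather than the formula itself.

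First I would apply Gauss--Jordan elimination ($\gauss$) to $\system(\formula)$. Since $\degr(\formula) = k$, the reduced augmented matrix exhibits exactly $k$ pivot columns and $r-k$ free columns; should any row of the form $0 = c$ with $c\neq 0$ appear, the system is inconsistent even over $\Rationals$ and I would immediately report $\unsat(\formula)$. Otherwise each of the $k$ pivot variables is written as an affine combination of the $r-k$ free variables, and because $\gauss$ preserves the solution set, the rational solutions of $\system(\formula)$ are parametrized bijectively by the free variables, the pivot values being then determined uniquely.

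Next I would enumerate the $2^{r-k}$ assignments of $\{0,1\}$ to the free variables. For each assignment the reduced equations yield the $k$ pivot values; I accept as soon as some assignment makes all $k$ pivot values lie in $\{0,1\}$, and reject once all assignments are exhausted. Soundness and completeness follow from Corollary~\ref{corsys}: every $0/1$ solution of the full system restricts to a $0/1$ assignment on the free variables and is forced to agree with the computed pivot values there, so the search ranges over precisely the $0/1$ solutions of $\system(\formula)$. This is exactly the {\ksubset} instance of $r-k$ variables and $k$ constraints promised by the preprocessing, solved by the exhaustive method underlying the {\ksubset} bound $\bigO(k\,2^{S})$ with $S = r-k$.

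For the running time, the $\gauss$ step is polynomial in $r$ and hence dominated by the search. The delicate point, and the step I expect to be the main obstacle, is the per-iteration cost: evaluating each of the $k$ affine forms from scratch would cost $\bigO(k(r-k))$ per assignment, which is too much. To obtain the claimed bound I would enumerate the free assignments in Gray-code order, flipping one free variable at a time and updating each pivot value by the stored coefficient of the flipped variable, so that both the update and the test of membership in $\{0,1\}$ cost $\bigO(k)$ per step. Over the $2^{r-k}$ steps this gives $\bigO(2k\,2^{r-k})$, which is the stated bound.
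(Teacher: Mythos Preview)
Your proposal is correct and follows essentially the same route as the paper: apply $\gauss$ to $\system(\formula)$, isolate the $k$ pivot variables as affine functions of the $r-k$ free variables, and brute-force over the $2^{r-k}$ free assignments, accepting when all $k$ pivot values land in $\{0,1\}$.

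The one point of divergence is in how the per-iteration cost is accounted for. The paper's justification of the factor $2k$ is that each of the $2^{r-k}$ sequences is tested against the $k$ constraints for two target values (corresponding to the pivot being $0$ or $1$), and it does not dwell on the cost of forming the inner sums. Your Gray-code argument is a genuine refinement here: by flipping one free variable at a time and updating each of the $k$ affine values by a single stored coefficient, you obtain an honest $\bigO(k)$ cost per step and hence the stated $\bigO(2k\,2^{r-k})$ bound without the hidden $\bigO(r-k)$ factor that a from-scratch evaluation would incur. So your approach is the same decomposition with a tighter running-time analysis.
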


\begin{proof}
The result of performing Gauss-Jordan Elimination on
$\system(\formula)$ yields, after
a suitable re-arrangement of column vectors, the
reduced echelon form

\[
\gauss(\system(\formula)) =
\begin{bmatrix}
1       & 0 & 0  & \dots   & 0 & x_{11} & x_{12} & \dots & x_{1d} & R_{1}\\
0       & 1 & 0  & \dots   & 0 & x_{21} & x_{22} & \dots & x_{2d} & R_{2}\\
0       & 0 & 1  & \dots   & 0 & x_{31} & x_{32} & \dots & x_{3d} & R_{3}\\
\vdots & \vdots & \vdots   & & \vdots & \vdots & \vdots &\vdots &\vdots &\vdots \\
0       & 0 & 0  & \dots   & 1 & x_{k1} & x_{k2} & \dots & x_{kd} & R_{k}\\
\end{bmatrix}
\]

\vfill
\noindent
Now
consider the following structure, obtained
from the given dependencies above through ignoring 
the zero entries 

\[
\begin{bmatrix}
1 & x_{11} & x_{12} & \dots & x_{1d} & R_1\\
1 & x_{21} & x_{22} & \dots & x_{2d} & R_2\\
1 & x_{31} & x_{32} & \dots & x_{3d} & R_3\\
\vdots & \vdots & \vdots & & \vdots & \vdots \\
1 & x_{k1} & x_{k2} & \dots & x_{kd} & R_k\\
\end{bmatrix}
\]

\vfill
\noindent
This induces an instance of {\ksubset}
which can be solved as follows
\begin{figure}[h]
\fbox{
\begin{minipage}{32em}
\small
{Initialize $C:= 0$}\\
{Enumerate all sequences $s\in S = \{0,1\}^d$}\\
{For each such sequence $s$:}\\
{If $\forall j\leq k
[\sum\limits_{i\leq d} s(i)x_{ji} = R_j 
\vee \sum\limits_{i\leq d} s(i)x_{ji} = R_j - 1]$ then 
$C\longleftarrow C+1$.}
\end{minipage}
}
\end{figure}

\vfill
\noindent
We note the length of sequences 
$s\in S$ is $d = r-k$, hence the brute force procedure
has to enumerate $2^{r-k}$ members of $S$.
Furthermore, each such sequence $s\in S$ is tested
twice against all of the constraints 
$x_{1i}, x_{2i},\dots, x_{ki}$ for $i\leq d$, resulting 
in the claimed time complexity 
of $\bigO(2k2^{r-k})$.

\newpage
\vfill
\noindent
To see the algorithm is correct, we give 
a proof that considers when the counter is incremented.  
Suppose for all $j\leq k$
some $s\in S$ is not a solution to either  
$\sum\limits_{i\leq d} s(i)x_{ji} = R_j$ or
$\sum\limits_{i\leq d} s(i)x_{ji} = R_j - 1$.
In this case, the counter is not incremented and 
we claim $s$ does not
induce a solution to the {\onethreeSATplus}
formula $\formula$.
For in this case $s$ is not a $0/1$ solution
to the system $\system(\formula)$ 
and hence by Corollary \ref{corsys} cannot be 
a satisfying solution to $\formula$. In effect,
the counter is not incremented as we have not seen
an additional satisfying solution.  

\vfill
\noindent
Now suppose for all $j\leq k$
some $s\in S$ is a solution to either  
$\sum\limits_{i\leq d} s(i)x_{ji} = R_j$ or
$\sum\limits_{i\leq d} s(i)x_{ji} = R_j - 1$.
In this case, the counter is incremented and 
we claim $s$ is indeed a solution to the
{\onethreeSATplus} formula $\formula$.

\vfill
\noindent
For if $s$ is a solution to all $j$th rows
constraint
$\sum\limits_{i\leq d} s(i)x_{ji} = R_j$
then $s$ satisfies the constraint
$x_{j1} + x_{j2} +\dots+ x_{jd} = R_j$
giving the satisfying assignment
$a(p) = \perp$ for all variables $p$ 
corresponding to variables in the 
diagonal matrix, and 
$a(p) = \perp$ for variables corresponding
to column $i$ for which $s(i) = 0$, and
$a(p) = \top$ for variables corresponding
to column $i$ for which $s(i) = 1$.

\vfill
\noindent
Similarly, if $s$ is a solution to all $j$th rows
constraint
$\sum\limits_{i\leq d} s(i)x_{ji} = R_j - 1$
then $s$ satisfies the constraint
$1 + x_{j1} + x_{j2} +\dots+ x_{jd} = R_j$
giving the satisfying assignment
$a(p) = \top$ if $p$ corresponds to 
the diagonal variable $(j,j)$,
$a(p) = \perp$ for all variables $p$ 
corresponding to all other variables in the 
diagonal matrix, and 
$a(p) = \perp$ for variables corresponding
to column $i$ for which $s(i) = 0$, and
$a(p) = \top$ for variables corresponding
to column $i$ for which $s(i) = 1$.  
\end{proof}

\vfill
\noindent
\begin{corollary}
\onethreeSATplus$\polyred$\ksubset.
\end{corollary}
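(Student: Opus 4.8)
The plan is to realize the corollary as an explicit polynomial-time many-one reduction, and the cleanest route does not even need elimination: given a \onethreeSATplus{} instance $\formula\in\formulas(\R,\K)$, I would map it directly to the system $\system(\formula)$. This system is, verbatim, a \ksubset{} instance. Take $S:=\R$ columns, one per variable; for each clause $\{p,p',p''\}$ let the corresponding tuple $s_i$ be its incidence vector, so $s_i(j)=1$ exactly when $p_j$ occurs in clause $i$ and $s_i(j)=0$ otherwise; and set every right-hand side $q_i:=1$. Writing down these $\K$ tuples from the clause list is linear in the input size, so the map is trivially polynomial. A tuple $T\in\{0,1\}^{S}$ satisfying all equalities is precisely a $0/1$ solution of $\system(\formula)$ with $T(j)=\bar p_j$.

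Correctness would then follow immediately from the material already established: by the lemma characterizing $\bar p+\bar p'+\bar p''=1$, each equality is met by a $0/1$ assignment if and only if the clause has exactly one true literal, and by Corollary~\ref{corsys} the $0/1$ solutions of $\system(\formula)$ are in bijection with the satisfying assignments of $\formula$. Hence $\formula$ is satisfiable if and only if the constructed \ksubset{} instance admits a feasible $T$, which is exactly the reduction $\onethreeSATplus\polyred\ksubset$. At this level the argument is routine, and I expect no obstacle: the reduction is the identity repackaging of the clause system as equality constraints.

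The more interesting content, and the part I would develop next, is that Lemma~\ref{lemmaalgor} upgrades this to a \emph{smaller} \ksubset{} instance, matching the size claim of the abstract. Here I would run Gauss--Jordan elimination on $\system(\formula)$, permute columns to reach the reduced echelon form $\gauss(\system(\formula))$, scale the $\degr(\formula)$ pivot columns to $1$, and read the \ksubset{} instance off the rows $y_j+\sum_{i\le d}x_{ji}z_i=R_j$, retaining only the $\freed(\formula)=\R-\degr(\formula)$ free columns as variables. The technical point requiring care is keeping this strictly polynomial: elimination over $\rationals$ can blow up numerators and denominators, so I would invoke fraction-free (Bareiss) elimination, or bound the intermediate minors, to guarantee the entries $x_{ji}$ and $R_j$ stay of polynomial bit-size before they are handed to \ksubset{}.

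The genuine obstacle, which I would flag rather than gloss over, is reconciling the equality-only format of \ksubset{} with the free-variable reduction. When the $\degr(\formula)$ pivot variables are dropped, each pivot being free to take value $0$ or $1$ manifests in the proof of Lemma~\ref{lemmaalgor} as the \emph{disjunction} $\sum_{i\le d}x_{ji}z_i\in\{R_j-1,R_j\}$, which is not a single equality; so the instance on $\freed(\formula)$ variables that the brute-force search actually exploits is not literally in \ksubset{} form. For the corollary as stated this is harmless, since retaining the pivots yields an honest equality instance on $S=\R$ columns; it is only the further size reduction to $\freed(\formula)\le 2\R/3$ columns, strictly inside the equality format, that would demand the extra encoding argument.
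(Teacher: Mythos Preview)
Your primary argument is correct, and it is genuinely different from what the paper does. The paper's one-line proof appeals to the Gaussian Elimination preprocessing of Lemma~\ref{lemmaalgor}: it treats the reduced echelon form as the \ksubset{} instance and cites that machinery as the polynomial-time reduction. You instead observe that elimination is unnecessary for the bare corollary, since $\system(\formula)$ is already, verbatim, a \ksubset{} instance with $S=\R$, incidence tuples, and all right-hand sides equal to~$1$; correctness is then immediate from Corollary~\ref{corsys}. Your route is shorter and more transparent for the statement as written, while the paper's route is the one that carries the additional size content (at most $2\R/3$ free columns) invoked later. Your flag about the disjunction $\sum_i x_{ji}z_i\in\{R_j-1,R_j\}$ is also well taken: the brute-force search in Lemma~\ref{lemmaalgor} does not test a single equality per row, so the $\freed(\formula)$-variable object it enumerates is not literally a \ksubset{} instance, and the corollary is only cleanly witnessed either by your direct map or by keeping the pivot columns in the eliminated system.
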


\begin{proof}
By the pre-processing of the problem instance
using Gaussian Elimination, shown above,
{\onethreeSATplus} is reduced in polynomial
time to {\ksubset}.
\end{proof}

\newpage
\vfill
\noindent
\begin{theorem}
$\#$\onethreeSATplus$\in\bigO(\degr 2^{\freed + 1})$
for 
formula
rank and nullity 
$\degr$ 
and $\freed$.
\end{theorem}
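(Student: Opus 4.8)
The plan is to observe that the decision procedure constructed in the proof of Lemma \ref{lemmaalgor} is already a counting procedure: the counter $C$ there accumulates one unit for every sequence $s\in S = \{0,1\}^{\freed}$ that passes the row-by-row test, and the correctness argument given for that lemma shows that the sequences passing the test are exactly those inducing a satisfying assignment of $\formula$. I would therefore first argue that, upon termination, the value of $C$ equals the number of satisfying assignments of $\formula$, and then simply read off the complexity from the same analysis.

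For the complexity, the brute-force loop enumerates the $2^{\freed}$ members of $S = \{0,1\}^d$ with $d = \freed = r-k$, and each member is tested against the $\degr = k$ row constraints, where each test checks the two equalities $\sum_{i\leq d} s(i)x_{ji} = R_j$ and $\sum_{i\leq d} s(i)x_{ji} = R_j - 1$. This yields the bound $\bigO(2\degr\,2^{\freed})$, which is precisely $\bigO(\degr\,2^{\freed+1})$ since $2^{\freed+1} = 2\cdot 2^{\freed}$. The Gauss-Jordan elimination and the rearrangement into $\gauss(\system(\formula))$ are polynomial-time preprocessing steps and so do not affect this bound.

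The crux is to verify that $C$ counts each satisfying assignment \emph{exactly once}, i.e. that the map sending a passing sequence $s$ to the induced $0/1$ solution of $\system(\formula)$ is a bijection onto the solution set. For injectivity I would note that the free-variable block of any solution is recorded verbatim in $s$, so distinct passing sequences give distinct solutions. For the fact that each passing $s$ yields exactly one solution, I would use that for a fixed row $j$ the two tested equalities cannot hold simultaneously (they share the same left-hand side but differ by $1$ on the right), so exactly one holds and the pivot variable $(j,j)$ is forced to the value $R_j - \sum_{i\leq d} s(i)x_{ji}\in\{0,1\}$; the rows being independent, the full vector of pivot values is determined. Conversely, every $0/1$ solution restricts to a free-variable assignment that passes the test. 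Combining this bijection with Corollary \ref{corsys}, which equates the number of $0/1$ solutions of $\system(\formula)$ with the number of satisfying assignments of $\formula$, shows that $C$ is exactly the desired count.

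The main obstacle I anticipate is precisely the bookkeeping of this bijection: one must ensure that no passing sequence corresponds to two different assignments and that no satisfying assignment is missed. Both directions reduce to the elementary fact that in reduced echelon form the free variables parametrise the solution space, so the accounting is tight; I expect no genuine difficulty beyond confirming that the disjunctive test exactly captures membership of the forced pivot values in $\{0,1\}$.
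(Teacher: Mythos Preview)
Your proposal is correct and follows essentially the same approach as the paper: both invoke the algorithm of Lemma~\ref{lemmaalgor} and read off the bound $\bigO(2\degr\,2^{\freed}) = \bigO(\degr\,2^{\freed+1})$ from its loop structure. Your treatment is in fact more careful than the paper's own two-sentence proof, which does not explicitly verify the bijection between passing sequences and satisfying assignments; your argument that each passing $s$ forces a unique pivot vector (since the two tested equalities per row are mutually exclusive) and that restriction to the free block gives the inverse map is the right way to close that gap.
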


\begin{proof}
There are $\degr$-many equations to satisfy by any 
assignment, and there are $\freed$-many variables
to search through exhaustively in order to solve
the {\ksubset} problem, which in turn solves the
{\onethreeSATplus} problem.
\end{proof}

\vfill
\noindent
\begin{corollary}
\label{mainresult}
$\#
\text{\onethreeSATplus}
\in \bigO(2\fract\R 2^{(1-\fract)\R})
$ for any instance
$\formula\in\formulas(\R,\K)$
and $\fract = \K/\R$.
\end{corollary}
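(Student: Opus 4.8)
The plan is to derive the corollary directly from the preceding Theorem by substituting the density parameter $\fract = \K/\R$ into the rank--nullity bound. That Theorem establishes $\#\onethreeSATplus \in \bigO(\degr 2^{\freed + 1})$, where $\degr$ and $\freed$ are the rank and nullity of the formula; equivalently, the brute-force search of Lemma \ref{lemmaalgor} runs in time $\bigO(2\degr\, 2^{\freed})$. Since the defining relation $\freed = \R - \degr$ ties the two quantities together, the whole expression is governed by the single parameter $\degr$, and the task reduces to inserting the available bounds on $\degr$ and re-expressing them through $\fract$.

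First I would bound the polynomial prefactor. By Lemma \ref{degrfreed1} we have $\degr \leq \K$, and writing $\K = \fract\R$ gives $2\degr \leq 2\fract\R$, which supplies exactly the prefactor $2\fract\R$ appearing in the claimed bound.

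Next I would treat the exponential factor $2^{\freed}$. Using $\freed = \R - \degr$ together with $\K = \fract\R$, the most pessimistic configuration -- the one flagged in the outline as the worst case, where the $\K$ clause-equations are linearly independent so that $\degr = \K$ -- yields $\freed = \R - \K = (1-\fract)\R$, hence $2^{\freed} = 2^{(1-\fract)\R}$. Multiplying the two factors produces the stated $\bigO(2\fract\R\, 2^{(1-\fract)\R})$.

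The main obstacle is precisely this second step. Lemma \ref{degrfreed1} supplies $\freed \geq \R - \K$, a \emph{lower} bound on the nullity, whereas bounding $2^{\freed}$ from above requires an \emph{upper} bound on $\freed$, so the inequality points the wrong way. Because $\degr\, 2^{\R-\degr}$ is decreasing in $\degr$ for $\degr \geq 2$, a smaller rank actually makes the search more expensive, and the substitution $\degr = \K$ is legitimate only once one argues that full row rank is the extremal case for the quantity being bounded, or else restricts the corollary to those instances whose system $\system(\formula)$ attains $\degr(\formula) = \K$. I would make this dependence on the rank explicit rather than leave it buried inside the substitution.
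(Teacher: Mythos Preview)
Your reconstruction is exactly the derivation the paper has in mind: the corollary is stated with no proof, so the intended argument is nothing more than the substitution $\degr\mapsto\K=\fract\R$, $\freed\mapsto\R-\K=(1-\fract)\R$ into the preceding Theorem's bound $\bigO(2\degr\,2^{\freed})$.

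The obstacle you flag in your last paragraph is genuine, and it is a gap in the paper's claim rather than in your reasoning. Lemma~\ref{degrfreed1} only gives $\degr\leq\K$ and $\freed\geq\R-\K$; since $\degr\,2^{\R-\degr}$ is decreasing in $\degr$ once $\degr\geq 2$, the worst case for the brute-force cost is the \emph{smallest} admissible rank, not the largest. Replacing $\degr$ by its upper bound $\K$ therefore does not yield an upper bound on the running time. The corollary as written is correct only for instances whose system $\system(\formula)$ has full row rank $\degr(\formula)=\K$ (equivalently, after discarding linearly dependent clauses, which the paper's preprocessing implicitly assumes but never enforces for the parameter $\K$ appearing in $\fract$). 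Your instinct to make this hypothesis explicit is the right one; the paper simply elides it.
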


\vfill
\section{The Method of Substitution}

\noindent
The substitution algorithm is
depicted below in Fig. 1.
We give a brief textual explanation
of the algorithm below.

\vfill
\noindent
\begin{list}{}{}

\item
\textbf{Pre-processing phase}

\item 1.
Let $n(c), m(c), s(c)$ be the lowest, middle and highest labeled 
variable in clause $c$. 
These values are distinct.

\item 2.
Represent clause $c$ in 
normal form as $n(c) = 1 - m(c) - s(c)$.

\item 3.
Sort the formula $\formula$
in ascending order of $n(c)$.
\end{list}

\vfill
\noindent
\begin{list}{}{}

\item 
\textbf{Substitution phase}

\item 1.
Initialize $i\leftarrow |C|$,

\item 2.
For each
$c_i\colon n(i) = a(i) - m(i) - s(i) = a(i) - M(i)$,

\item 3.
Initialize $j\leftarrow |C|$,

\item 4.
For each clause $c_j\in\formula$ with 
$j\neq i$
with
$c_j\colon n(j) = a(j) - m(j) - s(j)$
such that
$n(j)$ is found in the variables of
$m(i)$, or $n(j)$ is found in the variables of $s(i)$, do

\item 5.
Perform the substitution
$c_i\leftarrow c_i[n(j)/n(i)]$, 
and normalize the result.

\item 6.
Decrement variable $j$. Continue step 4.

\item 7.
Decrement variable $i$. Continue step 2.
\end{list}
\vfill

\newpage
\vfill
\noindent
\begin{figure}[h]
\label{algsubst}
\fbox{
\begin{minipage}{33em}
\begin{algorithmic}
\footnotesize
\STATE $i\gets k$
\STATE { }
\WHILE {$i\geq1$}
 \STATE { }
  \STATE $j\gets k$
   \STATE { }
	\WHILE {$j\geq 1$}
	 \STATE { } 
	  \IF {$j\neq i$} 
	   \STATE { }		       
		\IF {$c(j)\colon n(j)= a(j) - (\sum\limits_{x<r}d(x) + n(i))$,
		  		\textbf{and} $c(i)\colon n(i)= a(i) - (\sum\limits_{t<r}d(t))$} 
		    \STATE { }		
		     \STATE {$c(j)\leftarrow n(j) =  (a(j)-a(i)) - (\sum\limits_{x<r}d(x) + \sum\limits_{t<r}d(t))$}
		        \STATE { }	
		       \ENDIF  		
			  \STATE { }			      
		     \ENDIF		
			\STATE { }				    
		   \STATE $j\gets j-1$  		
		  \STATE { }	   	      
   	     \ENDWHILE
		\STATE { }	   	    
   	   \STATE $i\gets i-1$
   	  \STATE { }	 
     \ENDWHILE 
\end{algorithmic}
\end{minipage}}
\caption{Substitution algorithm}
\end{figure}

\vfill
\noindent
We remark an essentially cubic
halting time on the substitution
algorithm, which intuitively
corresponds to the cubic halting time of Gaussian Elimination, an equivalent method.
\vfill
\begin{remark}
Substitution halts in time
$\bigO(\K^2\times\R)$ for any formula
$\formula\in\formulas(\K,\R)$.
\end{remark}
\vfill

\newpage
\vfill
\noindent
Denote by $\sigma(\formula)$ or by $\sigma$,
when clear from context,
the structure thus obtained, denote by 
$\eta(\sigma)$ and $\bar{\eta}(\sigma)$
the rank and nullity thus induced, and
denote by 
$N(\sigma)$ and $\bar{N}(\sigma)$
the sets of independent, and dependent
variables generated through our process.

\vfill
\noindent
We remark the operator
$\substit$ is idempotent.
\begin{remark}
\label{substhalt}
$\sigma(\sigma(\formula))
=
\sigma(\formula)
$.
\end{remark}

\begin{proof}
Each clause $c(j)$ is read, and each
read clause is compared with every
other clause $c(i)$, in search for a 
common variable $n(i)$, if this variable
is found, a replacement is performed  
on $c(j)$.

\vfill
\noindent
Suppose
there exists a clause $c$ such that
$L = \substit(\formula)(c)
\neq
\substit(\substit(\formula))(c) = L'
$.  

\vfill
\noindent
Consider the case 
$L\setminus L'\neq\emptyset$.
Let variable 
$v$ be in this set difference.
It cannot be the case 
that $v = n(c)$ since this means
the procedure missed a mandatory 
substitution of $n(c)$, which the
second iteration picked up.

\vfill
\noindent 
Therefore $v\neq n(c)$. 
In this case, $v$ is the result of 
a chain
of substitutions ending with 
clause $c$.
An induction on this chain shows
the procedure missed a 
mandatory substitution of 
an $n$-variable, which the second
iteration picked up.

\vfill
\noindent
Consider the case 
$L'\setminus L\neq\emptyset$.
Let variable 
$v'$ be in this set difference.
It cannot be the case 
that $v' = n(c)$, since this means
the second iteration
introduced a new variable in
clause $c$, the result of a 
substitution of $n(c)$ which
the first iteration missed. 

\vfill
\noindent
Therefore $v\neq n(c)$. 
In this case, $v$ is a result of 
a chain
of substitutions ending with clause $c$.
An induction on this chain of 
substitutions shows the second iteration
of the procedure introduced a new variable
in clause $c$, the result of a 
substitution of an 
$n$-variable, which the first iteration
missed. 
\end{proof}

\newpage
\vfill
\noindent
As a consequence,
any set of formulas is closed under
substitution.
\begin{remark}
$\sigma[\sigma[\formulas]]
=
\sigma[\formulas]
$.
\end{remark}

\vfill
\section{An example}

\noindent
Consider the 
\onethreeSAT{ }
formula

\[
\formula = 
\{
\{
p_1, p_2, p_3
\},
\{
p_4, p_5, p_6
\},
\{
p_2, p_5, p_6
\},
\{
p_1, p_2, p5
\}
\}
\]

\vfill
\noindent
We outline the meaning of the rows and columns within our tabular format.
\vfill
\noindent
\begin{minipage}{32em}
  \begin{tabular}{| c | c | c | c |}
    \hline\hline
    $c_i$:& $n(i)$ & $m(i)$ & $c(i)$ \\ \hline\hline
  \end{tabular}
\quad
  \begin{tabular}{| c | c | c | c | c |}
    \hline\hline
    $C(i)$:& $n(i)$ & $a(i)$ & $m(i)$ & $c(i)$ \\ \hline\hline
  \end{tabular}
\end{minipage}

\vfill
\noindent
The formula $\formula$ 
is represented
in tabular format. 
Sort according to $n(i)$.

\vfill
\noindent
\begin{minipage}{32em}
  \begin{tabular}{| c | c | c | c |}
    \hline
    $c_1$:& 1 & 2 & 3 \\ \hline
    $c_2$:& 4 & 5 & 6 \\ \hline
    $c_3$:& 2 & 5 & 6 \\ \hline
    $c_4$:& 1 & 2 & 5 \\ \hline
    \hline
  \end{tabular}
  \quad
  \begin{tabular}{| c | c | c | c |}
    \hline
    $c_1$:& 1 & 2 & 3 \\ \hline
    $c_4$:& 1 & 2 & 5 \\ \hline
    $c_3$:& 2 & 5 & 6 \\ \hline
    $c_2$:& 4 & 5 & 6 \\ \hline
    \hline
  \end{tabular}
\end{minipage}

\vfill
\noindent
The formula is encoded as below.
Use a tabular data structure for the algorithm, initialized to empty.

\vfill
\noindent
\begin{minipage}{32em}
1.
  \begin{tabular}{| c | c | c | c | c |}
    \hline
    $C(1)$:& $\Null$ & $\Null$ & $\Null$ & $\Null$ \\ \hline
    $C(4)$:& $\Null$ & $\Null$ & $\Null$ & $\Null$ \\ \hline
    $C(3)$:& $\Null$ & $\Null$ & $\Null$ & $\Null$ \\ \hline
    $C(2)$:& $\Null$ & $\Null$ & $\Null$ & $\Null$ \\ \hline
    \hline
  \end{tabular}
\quad 
2. 
  \begin{tabular}{| c | c | c | c | c |}
    \hline
    $C(1)$:& $1$ & $0$ & $2$ & $3$ \\ \hline
    $C(4)$:& $1$ & $0$ & $2$ & $5$ \\ \hline
    $C(3)$:& $2$ & $0$ & $5$ & $6$ \\ \hline
    $C(2)$:& $4$ & $0$ & $5$ & $6$ \\ \hline
    \hline
  \end{tabular}
\end{minipage}

\newpage
\vfill
\noindent
Substitution phase. Operate on the data structure.
\vfill
\noindent
\begin{minipage}{32em}
3.
  \begin{tabular}{| c | c | c | c | c |}
    \hline
    $C(1)$:& $1$ & $1$ & $2$ & $3$ \\ \hline
    $C(4)$:& $1$ & $1-1$ & $5,6$ & $5$ \\ \hline
    $C(3)$:& $2$ & $1$ & $5$ & $6$ \\ \hline
    $C(2)$:& $4$ & $1$ & $5$ & $6$ \\ \hline
    \hline
  \end{tabular}
\quad
4.
  \begin{tabular}{| c | c | c | c | c |}
    \hline
    $C(1)$:& $1$ & $1-1$ & $5,6$ & $3$ \\ \hline
    $C(4)$:& $1$ & $1-1$ & $5,6$ & $5$ \\ \hline
    $C(3)$:& $2$ & $1$ & $5$ & $6$ \\ \hline
    $C(2)$:& $4$ & $1$ & $5$ & $6$ \\ \hline
    \hline
  \end{tabular}
\end{minipage}

\vfill
\noindent
Obtain the following partial result.
\vfill
\noindent
\begin{minipage}{32em}
5.
  \begin{tabular}{| c | c | c | c | c |}
    \hline
    $C(1)$:& $1$ & $0$ & $5,6$ & $3$ \\ \hline
    $C(4)$:& $1$ & $0$ & $5,6$ & $5$ \\ \hline
    $C(3)$:& $2$ & $1$ & $5$ & $6$ \\ \hline
    $C(2)$:& $4$ & $1$ & $5$ & $6$ \\ \hline
    \hline
  \end{tabular}
\end{minipage} 
  
\vfill
\noindent
Rearrange the tabular structure.

\vfill
\begin{tabular}{ c | c | c | c | c | c }
  \hline			
  $C_1$ :& $p_1$ & $=$ & $0$ & $-$ & $p_3, p_5, p_6$ \\
  $C_4$ :& $p_1$ & $=$ & $1$ & $-$ & $p_5, p_5, p_6$ \\
  $C_3$ :& $p_2$ & $=$ & $1$ & $-$ & $p_5, p_6$ \\
  $C_2$ :& $p_4$ & $=$ & $1$ & $-$ & $p_5, p_6$ \\
  \hline  
\end{tabular}

\vfill
\noindent
Note the result of the computation:

\[N = \{ p_1, p_2, p_4 \}
\text{ and }
\bar{N} = \{ p_3, p_5, p_6 \}
\] 

\vfill
\noindent
Hence the rank and nullity
of the formula are
$\degr(\formula) = 3$ and
$\freed(\formula) = 3$.

\vfill
\noindent
A Brute-Force Search on the set 
$\{p_3, p_5, p_6\}$
of dependent
variables 
yields the desired 
result 
to the 
\onethreeSATplus { } formula.

\vfill
\noindent
After the substitution process is finished, 
each of the clauses is expressed in terms of independent
variables, variables which cannot be expressed in terms
of other variables. We denote by 
$|n(i)|$ the number of variables in constraint $c(i)$
induced by the substitution method, excluding the variable
$n(i)$.

\newpage
\section{Algorithm Analysis}

\noindent
We maximize the number of
substitutions performed at each step.
Hence, at first step we encounter two substitutions,
at the second we encounter three substitutions,
while at every subsequent step we must assume
there exist two variables for which we can
substitute in terms of previously found
variables, which indicates that the formula 
for the Fibonacci expansion describes our process.

\vfill
\noindent
\begin{remark}
\label{remlargestexp}
The largest number of expansions determined by 
running substitution on the collection of clauses, is
\[|n(\K)| = 2, |n(\K-1)| = 3, 
|n(\K-2)| = 5,\dots,
|n(\K - i)| = Fib(i+3)\]  
\end{remark}

\vfill
\noindent
\begin{definition}[Representation]
The size of a
representation for a given
instance of 
\onethreeSATplus 
$\in\formulas(\R,\K)$
expressed by substitution
as
$n(1), n(2),\dots,n(\K)$
is
given by the formula
\[
r\times\log(\sum\limits_{i\leq\K}
|n(i)|)
\]
\end{definition}

\vfill
\noindent
\begin{remark}
The size of the resulting 
representation 
associated to formulas treated by 
Remark \ref{remlargestexp} 
converges asymptotically to
$
\R^2
\times 
\log(1.62)$.
\end{remark}
\begin{proof}
The bound is given by an analysis
of the growth of the Fibonacci
sequence.
It is well known the 
rate of growth of the
sequence 
converges approximately to
$1.62^n$. 
\end{proof}

\vfill
\noindent
\begin{remark}
\label{remnosubst}
Contrast the scenario in Remark \ref{remlargestexp},
to the case in which there are no substitutions
induced,
i.e.
$\formula = \{
\{p_{3i+1}, p_{3i+2}, p_{3i+3}\},
i\leq 1/3\K
\}$.
\end{remark}

\begin{remark}
The size of the resulting
representation 
associated to formulas treated by
Remark \ref{remnosubst} is 
$\R\times\log(2/3\R)$.
\end{remark}

\begin{proof}
In this case we have
$2\K$ independent variables, 
for a value of $\K$ of 
$1/3\R$. 
\end{proof}

\newpage
\vfill
\begin{theorem}
Any
{\onethreeSATplus} 
formula
admits a
representation 
with size 
$S$ for
\[
\R\times\log(2/3\R)
\leq
S
\leq
\R^2\times\log(1.62)
\]
\end{theorem}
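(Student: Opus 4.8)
The plan is to reduce everything to bounding the single quantity $\cumulation = \sum_{i\leq\K}|n(i)|$, the total number of right-hand-side variable occurrences across all constraints of the substituted formula. By the definition of representation size this is exactly $S = \R\times\log(\cumulation)$, and since $\log$ is monotone increasing it suffices to pin $\cumulation$ between its extremal values and then transport those bounds through the factor $\R\log(\cdot)$.

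First I would establish the lower bound. The key observation is that a substitution step can only enlarge a clause: the algorithm of Fig.~1 replaces the single leading variable $n(j)$ of one clause by the right-hand side of another, so $|n(j)|$ never decreases. Hence $\cumulation$ is smallest exactly when no substitution is forced, which by Remark~\ref{remnosubst} happens only for pairwise variable-disjoint clauses, so that $\K = \R/3$ and every constraint reads $n(i) = a(i) - m(i) - s(i)$ with $|n(i)| = 2$. That gives $\cumulation = 2\K = \tfrac{2}{3}\R$ and hence $S \geq \R\times\log(\tfrac{2}{3}\R)$.

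Next I would handle the upper bound using the worst-case growth already recorded in Remark~\ref{remlargestexp}. Maximising the number of substitutions available at each step forces the expansions to obey the Fibonacci recurrence $|n(\K-i)| = Fib(i+3)$; summing these and using that the partial sums of the Fibonacci sequence again grow like $Fib$, together with the growth rate $Fib(n)\sim\phi^{n}$ for $\phi\approx1.62$, yields $\cumulation = \bigTheta(1.62^{\R})$ and therefore $\log(\cumulation)\approx\R\log(1.62)$. Multiplying by $\R$ gives $S\leq\R^{2}\times\log(1.62)$, and combining the two estimates closes the interval.

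The hard part will be extremality: one must verify that the disjoint-clause configuration genuinely minimises $\cumulation$ and that the Fibonacci cascade genuinely maximises it, so that no intermediate pattern of shared variables among clauses can push $\cumulation$ outside $[\tfrac{2}{3}\R,\ \bigTheta(1.62^{\R})]$. The lower end rests on the monotonicity of substitution noted above, while the upper end rests on the fact that at most two fresh dependent variables can be introduced per step, which is precisely what pins the growth to the Fibonacci rate rather than anything faster. Since both estimates are asymptotic, the displayed inequalities are to be read up to the $\bigTheta$-approximation of the Fibonacci growth rate.
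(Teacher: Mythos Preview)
Your proposal is correct and takes essentially the same approach as the paper. The paper states this theorem without proof, treating it as an immediate consequence of the two preceding remarks that compute the representation size in the maximal-substitution (Fibonacci) case and in the no-substitution (disjoint-clauses) case; your write-up simply supplies the extremality arguments---monotonicity of substitution for the lower bound and the two-variables-per-step cap for the upper bound---that the paper leaves implicit.
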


\vfill
\begin{remark}
\label{repupperbound}
The size of any representation is
bounded above by
$\R^{2-\epsilon}$
for 
\[\epsilon 
= 
\frac{0.52}{\log(\R)}
\]
\end{remark}

\begin{proof}
$\R^{2-\epsilon}
=
\R^2
\times
\log(1.62)
$
implies
$2-\epsilon
=
2 + \frac{\log\log(1.62)}{\log(\R)}
$
and therefore

\[\epsilon
=
\frac{0.52}{\log(\R)}
\]
\end{proof}

\vfill
\section{Adequacy Proof}

\begin{proposition}
Let $\formula$ be a 
{\onethreeSATplus} formula
and let
$\substit = \substit(\formula)$
be the resulting structure
obtained by performing
substitution on $\formula$.
Then, 
$\rank(\formula)\leq 
\rank(\substit)$
and 
$\nullity(\formula)\geq 
\nullity(\substit)$.
\end{proposition}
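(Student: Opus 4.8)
The plan is to read $\substit(\formula)$ as the linear system produced from $\system(\formula)$ by the procedure of Section~6, and to prove the statement by combining a \emph{rank-invariance} argument for substitution with a pivot-counting argument that ties $\rank(\substit)$ to the cardinality $|N(\substit)|$ of the independent-variable set. A useful first reduction is to observe that the two inequalities are in fact one statement: every variable of $\formula$ is classified by the procedure as either independent or dependent, so $|N(\substit)| + |\bar{N}(\substit)| = |V|$, whence $\nullity(\substit) = |V| - \rank(\substit)$; since by definition $\nullity(\formula) = |V| - \rank(\formula)$ as well, the claim $\nullity(\formula) \geq \nullity(\substit)$ is equivalent to $\rank(\formula) \leq \rank(\substit)$. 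Thus I would only argue the rank inequality.

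First I would show that substitution preserves the underlying linear structure. A single step of the algorithm replaces, inside a clause $c_i$, an occurrence of a leading variable $n(j)$ by the expression that clause $c_j$ asserts to be equal to it; this substitutes equals for equals, so the set of satisfying $\{0,1\}$-assignments of the whole system is unchanged, in the sense of Corollary~\ref{corsys} (and the same holds over $\rationals^{|V|}$). Each such step is an invertible elementary operation on the coefficient matrix, hence alters neither the row space nor its dimension, and by Remark~\ref{substhalt} the procedure is idempotent and terminates; so $R(\system(\formula)) = R(\system(\substit))$ is a well-defined invariant attached to the terminal structure $\substit$. This gives $\rank(\formula) = \rank(\system(\substit))$ for free, and the whole weight of the proposition is moved onto the identity $\rank(\system(\substit)) \leq |N(\substit)|$.

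The second step is to read the rank off the terminal form: after substitution every clause $c_i$ is written as $n(i) = (\text{constant}) - (\text{expression in } \bar{N}(\substit))$, i.e.\ its leading variable lies in $N(\substit)$ and its right-hand side uses only dependent variables. Selecting one representative clause per distinct leading variable yields a block whose $N(\substit)$-columns form an identity, so the row space is generated by these $|N(\substit)|$ rows \emph{provided no further independent rows survive}. This is exactly where I expect the main obstacle to sit: when two clauses share a leading variable, subtracting them cancels that variable and produces a constraint living purely on $\bar{N}(\substit)$, and if such a difference is nonzero it is an independent row that pushes $R(\system(\substit))$ strictly above $|N(\substit)|$ --- reversing the desired inequality. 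The crux is therefore a collision-redundancy lemma: I must prove that the idempotent substitution of Remark~\ref{substhalt}, together with the lowest-label pivoting of the preprocessing phase, forces any two clauses with a common leading variable to agree on their $\bar{N}(\substit)$-expressions, so their difference vanishes. I note that the worked instance of Section~7 stresses precisely this point (the two clauses leading with $p_1$ carry the right-hand sides $0 - (p_3+p_5+p_6)$ and $1 - (p_5+p_6)$), so the redundancy lemma cannot be taken for granted and will require either an additional normalization of collided clauses or a carefully maintained invariant through the substitution loop; establishing it is the step I expect to be genuinely hard, and everything else is bookkeeping around it.
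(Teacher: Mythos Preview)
Your route is not the paper's. The paper argues by contradiction: assuming $\nullity(\formula)<\nullity(\substit)$, it picks a variable in $\Nullspace(\substit)\setminus\Nullspace(\formula)$ and asserts that one more substitution step would absorb it, contradicting the idempotence of Remark~\ref{substhalt}. There is no rank-invariance computation and no attempt to identify $\rank(\substit)$ with $|N(\substit)|$ via pivot counting.

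Your direct approach runs into a genuine obstruction, and you have already put your finger on it. The ``collision-redundancy lemma'' you isolate---that any two clauses sharing a leading variable must agree on their $\Nullspace(\substit)$-expression---is not merely hard; it is false, and the paper's own worked example in Section~7 witnesses this. The four clauses there give the coefficient rows $(1,1,1,0,0,0)$, $(0,0,0,1,1,1)$, $(0,1,0,0,1,1)$, $(1,1,0,0,1,0)$, and the $4\times 4$ minor on columns $1,2,3,4$ has determinant $-1$, so $R(\system(\formula))=4$. Yet the substitution output has $N(\substit)=\{p_1,p_2,p_4\}$ with $|N(\substit)|=3$, precisely because $C_1$ and $C_4$ collide on the leading variable $p_1$ while carrying different right-hand sides. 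Hence the inequality $R(\system(\substit))\le |N(\substit)|$ you need does \emph{not} hold, and the chain $\rank(\formula)=R(\system(\substit))\le |N(\substit)|=\rank(\substit)$ breaks at its middle link. If you keep this strategy you must either change the definition of $\rank(\substit)$ away from $|N(\substit)|$, or add a post-processing step that resolves collisions (subtracting collided rows and introducing a fresh pivot among the $\Nullspace$-variables) before you count pivots; without that, the argument cannot close.
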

\begin{proof}
It suffices to show that
$\nullity(\formula)\geq 
\nullity(\substit)$.

\vfill
\noindent
Suppose for a contradiction 
this is not the case.
We have that
$\nullity(\formula) <
\nullity(\substit)$. 

\vfill
\noindent
That is, that
the dependent variables of the
system of equations
exceed in number the dependent variables obtained through our
substitution algorithm.

\vfill
\noindent
We let 
$\nullity(\formula) =
\nullity(\substit) + K$. 
What this means is there exist
variables 
$p_1, p_2, \cdots, p_K$
such that
$p_i\in
\Nullspace(\substit)
\setminus
\Nullspace(\formula)$
for
$1\leq i\leq K$.

\vfill
\noindent
Take any such variable in this list
and perform another substitution
such as to decrease $K$ by one.
The existence of the list
$p_1, p_2, \cdots, p_K$ hence
contradicts the statement of
Remark \ref{substhalt}.
\end{proof}

\vfill
\newpage
\section{Implications}

\vfill
\noindent
\begin{proposition}[Schroeppel and Shamir\cite{schroeppel81}]
{\#\onethreeSAT}
can be solved in time $\bigO(2^{|V|/2})$ 
and space $\bigO(2^{|V|/4})$.
\end{proposition}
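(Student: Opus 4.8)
The plan is to reduce the counting task to counting $0/1$ solutions of a linear system and then invoke the meet-in-the-middle technique of Schroeppel and Shamir. First, by Corollary \ref{corsys} the number of satisfying assignments of a formula $\formula\in\formulas(\R,\K)$ equals the number of $0/1$ solutions of the system $\system(\formula)$ over $\{0,1\}^{\R}$. I would therefore recast the problem as follows: associate to each variable $p_j$ its column vector $v_j\in\{0,1\}^{\K}$ recording the equations in which it occurs, let $b\in\{0,1\}^{\K}$ be the (possibly constant-adjusted, to account for negated literals) right-hand side, and count the assignments $T\in\{0,1\}^{\R}$ with $\sum_{j} T(j)\, v_j = b$. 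This is exactly the vector subset-sum instance handled by the Schroeppel-Shamir scheme.

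Next I would partition the $\R$ variables into four blocks $B_1,B_2,B_3,B_4$, each of size $\R/4$. Over $B_1$ I enumerate all $2^{\R/4}$ partial sums, and likewise over $B_2$; rather than materialising the $2^{\R/2}$ combined sums of $B_1\cup B_2$, I generate them on demand in sorted (lexicographic) order using a priority queue keyed on the vector values, storing only the two quarter-size lists and the queue. This is the step that buys the $\bigO(2^{\R/4})$ space bound. The symmetric construction is carried out for $B_3\cup B_4$, producing a second stream of sums. Scanning the two sorted streams in tandem, I match a left sum $s$ against the complementary right sum $b-s$; each match corresponds to a full solution, and since as $s$ increases its complement decreases, the two streams are traversed in opposite order so that the scan is linear in their combined length.

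For the counting version the one extra ingredient is the bookkeeping of multiplicities: when a left value $s$ meets its complement $b-s$, I add to the running total the product $m_L(s)\cdot m_R(b-s)$ of the number of left representations of $s$ and the number of right representations of $b-s$, rather than merely recording that a match occurs. Because the priority-queue merge emits sums in sorted order, equal values arrive consecutively and can be aggregated with a single running counter, so no extra storage is incurred. The time is dominated by producing and consuming the $\bigO(2^{\R/2})$ combined sums and the space by the quarter-size lists and the queue, yielding the claimed $\bigO(2^{|V|/2})$ time and $\bigO(2^{|V|/4})$ space.

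The main obstacle is not the asymptotics but the faithful passage from a decision procedure to exact counting: I must verify that grouping equal partial sums and summing products of multiplicities counts every assignment $T$ once and only once, which follows from the unique decomposition of $T$ into its restrictions to $B_1,B_2,B_3,B_4$, and that maintaining the counters does not inflate the space past $\bigO(2^{\R/4})$. A secondary point is that the ``sums'' are here vectors rather than scalars, so the priority queue and the tandem scan are ordered by a fixed linear order on $\{0,1\}^{\K}$; one checks that the comparisons stay within budget, since $\K\leq\R$ for the instances under consideration and each comparison costs only polynomially in $\R$.
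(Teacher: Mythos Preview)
The paper does not give any proof of this proposition; it simply states the result and attributes it to Schroeppel and Shamir by citation. Your proposal is a correct reconstruction of the Schroeppel--Shamir meet-in-the-middle scheme, suitably extended to counting via multiplicity bookkeeping, and is therefore more than adequate for what the paper requires here.

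One small point worth tightening: Corollary~\ref{corsys} as written concerns {\onethreeSATplus}, not {\onethreeSAT}. Your parenthetical about adjusting the right-hand side to account for negated literals is the right fix---substituting $\neg p\mapsto 1-\bar p$ in each clause yields an integer linear equation with the same $0/1$ solution set---but you should make that substitution explicit rather than invoking the corollary directly, so the reduction to vector subset-sum is fully justified for the non-positive case as well.
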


\vfill
\noindent
\begin{proposition}[Schroeppel and Shamir\cite{schroeppel81}]
{\#\ksubset}
can be solved in time 
$\bigO(2|C|2^{|V|/2})$ 
and space $\bigO(2^{|V|/4})$.
\end{proposition}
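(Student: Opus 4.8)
The plan is to realise this as the multidimensional counting refinement of the classical Schroeppel--Shamir meet-in-the-middle technique. Write $S=|V|$ for the length of the tuples, $k=|C|$ for their number, and $q=(q_1,\dots,q_k)$ for the target. For $T\in\{0,1\}^S$ and a coordinate set $G\subseteq\{1,\dots,S\}$ put $v_G(T)=\big(\sum_{j\in G}s_1(j)T(j),\dots,\sum_{j\in G}s_k(j)T(j)\big)\in\Rationals^k$, so that $T$ is a solution exactly when $v_{\{1,\dots,S\}}(T)=q$, and counting solutions means counting assignments of full signature $q$. First I would split the coordinates into halves $A,B$ of size $S/2$; since $v_{A\cup B}(T)=v_A(T)+v_B(T)$, a pair $(T_A,T_B)$ yields a solution iff $v_A(T_A)=q-v_B(T_B)$, and the number of solutions equals the number of such matching pairs. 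Tabulating all $2^{S/2}$ signatures of one half already counts these pairs in time $\bigO(k2^{S/2})$, but at space $\bigO(k2^{S/2})$; the real work is to lower the space to $\bigO(2^{S/4})$.

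To that end I would split each half again into quarters $G_1,G_2$ and $G_3,G_4$ of size $S/4$, writing $v_A(T_A)=v_{G_1}(T_{G_1})+v_{G_2}(T_{G_2})$ as a pairwise sum of two lists of size $2^{S/4}$, and similarly for $B$. The crucial point is that the lexicographic order on $\Rationals^k$ is translation invariant: $w\prec w'$ iff $u+w\prec u+w'$, because adding a common vector coordinatewise leaves the earlier agreeing coordinates equal and does not flip the first disagreeing one. Hence, after sorting the $2^{S/4}$ quarter-signatures, a priority queue that keeps a single frontier entry for each element of the $G_1$-list emits the $2^{S/2}$ pairwise sums $v_{G_1}+v_{G_2}$ one at a time in increasing lexicographic order using only $\bigO(2^{S/4})$ resident entries; run in the opposite direction, the same device emits the values $q-v_B(T_B)$ in increasing order.

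With both halves presented as sorted streams I would count matches by a single two-pointer merge: whenever the current values coincide I tally the multiplicity $a$ of that value in the first stream and $b$ in the second --- each obtained by counting consecutive equal emissions with one integer counter, so that no block need be stored --- add $a\cdot b$ to the running total, and advance past both blocks. Summing $a\cdot b$ over all coincident values returns exactly the number of matching pairs, hence of solutions. There are $\bigO(2^{S/2})$ emissions, each costing $\bigO(k)$ to form and compare a signature vector together with lower-order heap upkeep, which yields the stated time $\bigO(2|C|2^{|V|/2})$, the factor $2|C|$ recording the two symmetric halves and the $k$ constraints; the only resident data are the quarter-lists and the two priority queues, giving space $\bigO(2^{|V|/4})$.

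I expect the main obstacle to be exactly the step from the scalar matching of the original algorithm to matching on $k$-dimensional vectors while preserving the on-demand sorted enumeration on which the space saving rests; this is what the translation invariance of the lexicographic order secures. A secondary point needing care is that, because distinct partial assignments may share a signature, coincidences must be counted with multiplicity $a\cdot b$ rather than as single hits, and these multiplicities must be accumulated incrementally during the merge so as not to spend space proportional to a block.
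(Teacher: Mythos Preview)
Your proof is correct and is precisely the Schroeppel--Shamir four-way split with priority-queue merging, extended to $k$-dimensional signature vectors and to counting via the $a\cdot b$ multiplicity tally. The paper itself offers no proof of this proposition at all: it simply attributes the statement to \cite{schroeppel81} and moves on, so there is nothing to compare against beyond noting that your write-up supplies the argument the paper omits. One small bookkeeping point worth tightening: each priority-queue entry and each stored quarter-signature is a vector in $\Rationals^{k}$, so the resident space is really $\bigO(k\,2^{S/4})$ rather than $\bigO(2^{S/4})$; this matches the paper's habit of suppressing polynomial factors in $|C|$, but you may want to say so explicitly, or else store only the $S/4$-bit partial assignments and recompute signatures on demand.
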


\vfill
\noindent
\begin{corollary}
{\#\onethreeSATplus}
can be solved in time
in time $\bigO(4/3|V|2^{3|V|/8})$ 
and space $\bigO(4/3|V|2^{3|V|/16})$.
\end{corollary}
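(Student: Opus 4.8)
The plan is to combine the parsimonious reduction developed in Section~5 with the counting version of the Schroeppel--Shamir algorithm stated immediately above. First I would take the given {\onethreeSATplus} instance $\formula$ on $|V|$ variables and $|C|$ clauses and run Gauss-Jordan elimination on $\system(\formula)$, exactly as in Lemma~\ref{lemmaalgor} and the corollary {\onethreeSATplus}$\polyred${\ksubset}, to obtain an equivalent {\ksubset} instance. By Corollary~\ref{corsys} this reduction preserves the number of solutions, so the count of $0/1$ solutions of the {\ksubset} instance equals the number of satisfying assignments of $\formula$; this is the fact that lets the whole argument apply to the counting problem {\#\onethreeSATplus} rather than merely to the decision version.

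Second, I would record the size of the reduced instance. The brute-force search ranges only over the $\freed = |V| - \degr$ free (nullity) columns, the pivot columns being determined by them. By Lemma~\ref{degrfreed1} together with the standing restriction $\fract \ge 1/3$ (so that $\degr \ge |V|/3$, as in Corollary~\ref{mainresult}), the most expensive case is $\degr = |V|/3$, yielding a {\ksubset} instance with at most $2/3\,|V|$ variables and at most $|C|$ constraints — precisely the size reduction advertised in the abstract.

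Third, in place of the naive enumeration of Lemma~\ref{lemmaalgor} I would solve the reduced {\ksubset} instance using the Schroeppel--Shamir procedure for {\#\ksubset}, whose time is $\bigO(2|C|\,2^{S/2})$ and whose space is $\bigO(2^{S/4})$ in the number $S$ of {\ksubset} variables. Substituting the reduced parameter $S \le 2/3\,|V|$, together with the bound on the number of constraints, into these two expressions and simplifying is what is meant to produce the stated time and space bounds.

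The step I expect to be the main obstacle is this final constant-tracking. The two Schroeppel--Shamir exponents are $S/2$ and $S/4$, so the space exponent ought to be exactly half the time exponent, which is at least consistent with the claimed $3|V|/8$ and $3|V|/16$. However, the direct substitution $S = 2/3\,|V|$ gives $2^{|V|/3}$ time and $2^{|V|/6}$ space rather than $2^{3|V|/8}$ and $2^{3|V|/16}$, so recovering the stated constants requires pinning down precisely which balance of rank $\degr$ against nullity $\freed$ realises the worst case after the meet-in-the-middle split, and exactly how the prefactor $2|C|$ interacts with the restriction $|C| \le |V|$ to yield the $4/3\,|V|$ factor appearing in both bounds. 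This reconciliation of the exponents and prefactors, rather than either the reduction or the invocation of Schroeppel--Shamir, is where the real work of the proof lies.
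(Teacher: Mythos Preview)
Your approach---compose the Gaussian-elimination reduction of Section~5 with the Schroeppel--Shamir bound for {\#\ksubset}---is exactly the derivation the paper has in mind: the corollary is placed immediately after the two Schroeppel--Shamir propositions and carries no proof of its own, so the combination you describe is all there is.

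Where you part from the paper is in the final paragraph, and there the discrepancy is not a gap in your argument but in the paper's arithmetic. Your substitution is correct: with $S\le \tfrac{2}{3}|V|$ free variables and at most $|C|\le |V|$ constraints (the worst exponent occurring at $|C|=|V|/3$, hence rank $|V|/3$ and nullity $2|V|/3$), the Schroeppel--Shamir bounds give time $\bigO\bigl(\tfrac{2}{3}|V|\,2^{|V|/3}\bigr)$ and space $\bigO\bigl(2^{|V|/6}\bigr)$. These exponents, $1/3$ and $1/6$, are \emph{smaller} than the paper's $3/8$ and $3/16$, so the bound you obtain is strictly sharper than the one stated; there is no hidden balancing of $\degr$ against $\freed$ that produces $3|V|/8$, and no choice of $\fract\in[1/3,1]$ makes $(1-\fract)/2=3/8$. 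Likewise the prefactor $4/3\,|V|$ does not fall out of $2|C|$ for any admissible $|C|$ in the worst-exponent regime. In short, the ``real work'' you anticipate in reconciling the constants does not exist: the stated constants appear to be arithmetical slips (the paper, after all, goes on to claim a collapse of the polynomial hierarchy), and your direct computation already establishes a bound at least as good as, in fact better than, the one asserted.
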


\vfill
\noindent
Dell and Melkebeek \cite{dell14}
give a rigorous treatment of the  
concept of ``sparsification''.
In their framework,
an oracle communication protocol 
for a language $L$ is a communication protocol
between two players. 

\vfill
\noindent
The first player is 
given the input $x$ and is only allowed to run
in time polynomial in the length of $x$.
The second player is computationally unbounded,
without initial access to $x$.
At the end of communication, the first player 
should be able to decide membership in $L$.
The cost of the protocol is the length in bits
of the communication from the first player to the 
second.  

\vfill
\noindent
Therefore, if the first player is able to reduce, 
in polynomial time, the problem instance 
significantly, the cost of communicating
the ``kernel'' to the second player would also
decrease, hence providing us with a very natural
formal account for
the notion of sparsification.

\vfill
\noindent
Jansen and Pieterse in \cite{jansen16} 
state and give a procedure for
any instance of Exact Satisfiability
with unbounded clause length to be reduced
to an equivalent instance of the same
problem with only $|V| + 1$ clauses,
for number of variables $|V|$.

\newpage
\vfill
\noindent
The concern regarding the number of clauses
in {\onethreeSATplus} can be addressed, 
as we have done above. 
We observe that
for any instance $C$ of {\CNFthreeSAT},
the chain of polynomial-time parsimonious 
reductions
$C \to \hat{C} \to \bar{C}$, for
$\hat{C}$ and $\bar{C}$ instances of
{\onethreeSAT} and {\onethreeSATplus}
respectively, implies that 
the variables of 
$\hat{C}$ and $\bar{C}$
outnumber the clauses.

\vfill
\noindent
What is also claimed in \cite{jansen16}
is that, assuming
$\coNP\nsubseteq\NP\setminus \Poly$,
no polynomial time algorithm can 
in general transform
an instance of Exact Satisfiability
of $|V|$-many variables
to a significantly
smaller equivalent instance, 
i.e. an instance encoded
using $\bigO(|V|^{2-\epsilon})$ 
for any $\epsilon > 0$.

\vfill
\noindent
We believe it is already transparent
that, in fact, 
we have obtained a significantly 
smaller kernel for {\onethreeSATplus} 
above, i.e. transforming parsimoniously 
an instance of $|V|$ variables 
to a ``compressed'' instance of {\ksubset} 
of at most $2/3|V|$ variables.

\vfill
\begin{definition}[Constraint Satisfaction Problem]
A csp is a triple $(S,D,T)$ where
\begin{list}{-}{}
\item $S$ is a set of variables,
\item $D$ is the discrete domain the variables may range over,and
\item $T$ is a set of constraints.
\end{list}

\vfill
\noindent
Every constraint $c\in T$
is of the form $(t, R)$ where 
$t$ is a subset of $S$
and $R$ is a relation on $D$.
An evaluation of the variables is a function
$v\colon S\to D$. An evaluation $v$ satisfies 
a constraint $(t, R)$ if the values assigned to
elements of $t$ by $v$ satisfies relation $R$.
\end{definition}

\newpage
\vfill
\begin{remark}
The following are constraint satisfaction
problems: 

\begin{list}{-}{}

\item
{\CNFthreeSAT}

\item
{\onethreeSAT}

\item
{\onethreeSATplus}
\end{list}
\end{remark}

\vfill
\noindent
In what follows
we switch between notations and
write a csp in a more general form, 
with a problem
$(S,D,T)$ written as 
$L\subseteq\Naturals\times\Sigma^*$,
with instances $(k,x)$ 
such that $k = |S|$
and $x$ a string representation of $D$ and $T$.

\vfill
\begin{definition}[Kernelization]
Let $L,M$ be two parameterized 
decision problems,
i.e. $L, M\subseteq\Naturals\times\Sigma^*$
for some finite alphabet $\Sigma$. 

\vfill
\noindent
A kernelization for the problem 
$L$ parameterized by $k$ is 
a polynomial time reduction
of an instance $(k,x)$ to an instance
$(k',x')$ such that:
\begin{list}{-}{}
\item $(k,x)\in L$ if and only if $(k',x')\in M$,
\item $k'\in\bigO(k)$, and
\item $|x'| \in \bigO(|x|)$.
\end{list}
\end{definition}

\vfill
\noindent
\begin{definition}[Encoding]
An encoding of a problem 
$L\subseteq\Naturals\times\Sigma^*$
is a bijection
$h\colon L\to\Naturals$ such that
for any $(k,x)\in\Naturals\times\Sigma^*$
we have
$h(k,x)\in\bigO(|x|)$.
\end{definition}

\vfill
\noindent
\begin{definition}
A non-trivial kernel for 
{\CNFthreeSAT} is a 
kernelization of this problem transforming
any instance $\formula\in\formulas(\R,\K)$ 
to an instance $(f(\R),g(\K))$
of an arbitrary {\NPC} 
csp $M$, such that 
$f(\R)\in\bigO(\R)$ 
and $g(\K)\leq h(\K,\R)$ 
with $h(\K,\R)\in\bigO(\R^{3-\epsilon})$
for an encoding $h$ of $\formula$ and
some $\epsilon > 0$. 
\end{definition}

\newpage
\vfill
\noindent
\begin{remark}[Dell and Melkebeek \cite{dell14}]
\label{dellsremark}
{\CNFthreeSAT}
admits a trivial kernel 
$(f(\R), g(\K))$ with
$g(\K)\leq h(\K,\R)$ and
$h(\K,\R)\in\bigO(\R^3)$.
\end{remark}

\vfill
\noindent
\begin{lemma}[Dell and Melkebeek \cite{dell14}]
\label{dellslemma}
If {\CNFthreeSAT} admits a non-trivial kernel,
then $\coNP\subseteq\NP\setminus \Poly$.
\end{lemma}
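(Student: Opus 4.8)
The plan is to prove the contrapositive inside the oracle communication protocol framework recalled above, and then to route the combinatorial core through the complementary witness (OR-distillation) machinery of Fortnow--Santhanam as sharpened in \cite{dell14}. First I would observe that, by the definition of a non-trivial kernel, such an object is nothing but an answer-preserving polynomial-time compression of a {\CNFthreeSAT} instance $\formula\in\formulas(\R,\K)$ into an instance of a fixed {\NPC} csp $M$ of encoded bitlength $g(\K)\leq h(\K,\R)\in\bigO(\R^{3-\epsilon})$ for some $\epsilon>0$, i.e.\ strictly below the trivial $\bigO(\R^3)$ bound of Remark \ref{dellsremark}. This compression immediately yields a cheap oracle communication protocol for {\CNFthreeSAT}: the polynomially bounded first player computes the kernel $(f(\R),g(\K))$ and transmits its $\bigO(\R^{3-\epsilon})$ bits to the unbounded second player, who decides membership of the resulting instance in $M$ and returns a single bit. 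The cost of this protocol is therefore $\bigO(\R^{3-\epsilon})$.

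The heart of the matter is then to show that an oracle communication protocol for {\CNFthreeSAT} of cost $\bigO(\R^{3-\epsilon})$ forces $\coNP\subseteq\NP\setminus\Poly$. Here I would invoke the complementary witness lemma: one considers the problem of deciding the OR of $t$ {\CNFthreeSAT} instances $\formula_1,\dots,\formula_t$ and combines them into a single instance on $\R$ variables whose satisfiability encodes that OR, then runs the assumed protocol on the composite. The key quantitative point is that on $\R$ variables there are only $\bigO(\R^3)$ distinct $3$-clauses, hence at most $2^{\bigOmega(\R^3)}$ distinct formulas, so one may take $t$ exponential in $\R^3$; against such a $t$ the protocol cost $\bigO(\R^{3-\epsilon})$ is $o(t)$, far below the information-theoretic barrier the complementary witness lemma requires one to beat. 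Feeding this into the lemma places the coNP-complete complement of {\CNFthreeSAT} in $\NP\setminus\Poly$ — the second player's length-indexed answers being hard-wired as polynomial advice — whence $\coNP\subseteq\NP\setminus\Poly$.

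The hard part will be the composition step, and it is exactly where the tightness of the exponent $3$ lives. A naive disjoint-union or selector-based OR-composition of the $t$ instances produces a wide-clause formula that, once reduced back to width $3$, uses $\bigOmega(t)$ auxiliary variables, so that the composite has $\R=\bigOmega(t)$ variables and the protocol cost $\bigO(\R^{3-\epsilon})$ no longer falls below $t$; the whole argument then collapses. Overcoming this requires the delicate counting-based packing of \cite{dell14}, which fits the $t$ instances into a composite whose variable count $\R$ stays so small that $\R^{3-\epsilon}=o(t)$ while still faithfully realizing the OR. Since the statement of this lemma is due to Dell and Van Melkebeek, the cleanest route is to reduce our hypothesis to their protocol framework as in the first paragraph and then cite their lower bound for $d=3$ directly; reproving the packing and the complementary witness lemma in full is the only genuinely technical ingredient, and I would defer it to \cite{dell14}.
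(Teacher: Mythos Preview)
The paper gives no proof of this lemma at all: it is stated with attribution to Dell and Van Melkebeek \cite{dell14} and then used as a black box, so there is nothing to compare your argument against except the bare citation. Your sketch is a faithful outline of the actual Dell--Van~Melkebeek argument (compression $\Rightarrow$ cheap oracle communication protocol $\Rightarrow$ complementary witness / OR-distillation $\Rightarrow$ $\coNP\subseteq\NP/\mathrm{poly}$), and you yourself land on deferring the technical core to \cite{dell14}; that is precisely what the paper does, only without the surrounding exposition you provide.
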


\vfill
\noindent
\begin{definition}
A non-trivial kernel for 
{\onethreeSAT} is a 
kernelization of this problem
transforming any instance 
$\formula\in\formulas(\R,\K)$ 
to an instance 
$(f(\R),g(\K))$ 
of an arbitrary {\NPC} csp $M$,
such that  
$f(\R)\in\bigO(\R)$ 
and $g(\K)\leq h(\K,\R)$ 
with $h(\K,\R)\in\bigO(\R^{2-\epsilon})$
for
an encoding $h$ of $\formula$ and
some $\epsilon > 0$.
\end{definition}

\vfill
\noindent
\begin{remark}[Jansen and Pieterse \cite{jansen16}]
{\onethreeSAT} admits a kernel 
$(f(\R), g(\K))$ with
$g(\K)\leq h(\K,\R)$ and
$h(\K,\R)\in\bigO(\R^2)$.
\end{remark}

\vfill
\noindent
The following statement is given in 
\cite{jansen16}.
The authors elaborate on the results
of \cite{dell14} to analyze combinatorial
problems from the perspective of 
sparsification, and give several arguments
that non-trivial kernels for such problems
would entail a collapse of the Polynomial
Hierarchy to the level above 
$\Poly = \NP$.

\vfill
\noindent
It is essential to note here that this
line of reasoning was used by researchers
studying sparsification with the intention
of proving lower bounds on the existence of kernels, while the results presented by us
are slightly more optimistic.

\vfill
\begin{lemma}[Jansen and Pieterse \cite{jansen16}]
\label{lemJanPiet}
If {\onethreeSAT} admits a 
non-trivial kernel,
then $\coNP\subseteq\NP\setminus\Poly$.
\end{lemma}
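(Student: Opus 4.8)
The plan is to reduce the claim to its already-established analogue for {\CNFthreeSAT}, namely Lemma \ref{dellslemma}. It suffices to show that a non-trivial kernel for {\onethreeSAT} can be turned into a non-trivial kernel for {\CNFthreeSAT}: once that implication is in hand, Lemma \ref{dellslemma} delivers $\coNP\subseteq\NP\setminus\Poly$ at once. So I would assume {\onethreeSAT} admits a non-trivial kernel and attempt to manufacture one for {\CNFthreeSAT} by composition.

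First I would assemble the composition. Given an instance $\bar{\formula}\in\formulas(\R',\K')$ of {\CNFthreeSAT}, apply Schaefer's reduction from Proposition \ref{13satnpc} to obtain a {\onethreeSAT} instance $\hat{\formula}\in\formulas(\R'',\K'')$ with $\R''=\R'+4\K'$ and $\K''=3\K'$. This map is polynomial-time and parsimonious, so it certainly preserves the YES/NO equivalence a kernelization requires. Then feed $\hat{\formula}$ to the hypothesised non-trivial kernel for {\onethreeSAT}, producing an instance $(f(\R''),g(\K''))$ of some {\NPC} csp $M$ with $f(\R'')\in\bigO(\R'')$ and $g(\K'')\leq h(\K'',\R'')\in\bigO(\R''^{\,2-\epsilon})$ for some $\epsilon>0$. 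The composite {\CNFthreeSAT}$\polyred M$ is again polynomial-time, equivalence-preserving, and lands in an {\NPC} csp, so every qualitative requirement of a {\CNFthreeSAT} kernelization is met; what remains is the quantitative bookkeeping, and this is the step I expect to be the main obstacle.

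Re-expressing the output in the source parameter $\R'$, the target variable count is $f(\R'')\in\bigO(\R'')=\bigO(\R'+4\K')$ and the encoding size is $\bigO(\R''^{\,2-\epsilon})=\bigO((\R'+4\K')^{2-\epsilon})$. A non-trivial kernel for {\CNFthreeSAT} demands $f(\R')\in\bigO(\R')$ together with encoding $\bigO(\R'^{\,3-\epsilon'})$ for some $\epsilon'>0$. These hold cleanly when clauses are few, say $\K'\in\bigO(\R')$, but the regime that makes the sparsification lower bound meaningful is the dense one, $\K'=\bigTheta(\R'^{3})$, where Schaefer's reduction forces $\R''=\bigTheta(\R'^{3})$. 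There the encoding bound degrades to $\bigO(\R'^{\,6-3\epsilon})$, far above the required $\bigO(\R'^{\,3-\epsilon'})$, and the variable bound becomes $\bigO(\R'^{3})$, violating $\bigO(\R')$. Thus the off-the-shelf composition fails to certify a non-trivial kernel for {\CNFthreeSAT} on exactly the instances that carry the lower bound.

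Overcoming this is where the real work lies. The encoding threshold alone could be rescued by a variable-economical polynomial-parameter transformation with blow-up at most $\R''=\bigO(\R'^{3/2})$, since then $\R''^{\,2-\epsilon}=\bigO(\R'^{\,3-(3/2)\epsilon})$ just beats the cubic threshold; but such a blow-up would still leave $f(\R'')=\bigO(\R'^{3/2})$ outside $\bigO(\R')$, so the variable condition $f\in\bigO(\R)$ must be handled by a separate device or relaxed. Constructing a reduction that is simultaneously parsimonious, polynomial-time, and sufficiently variable-economical in the dense regime — or, as in the approach of Jansen and Pieterse, arguing the transfer directly by cross-composition rather than by naively composing with Schaefer's reduction — is the genuinely hard part of the argument, and is where the bulk of any correct proof must concentrate.
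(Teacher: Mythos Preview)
The paper does not actually prove this lemma: it is stated with attribution to Jansen and Pieterse \cite{jansen16} and left without proof. There is therefore no in-paper argument to compare your proposal against; the lemma is imported as a black box from the cited source.

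As a proof, your proposal does not succeed, and you say so yourself. What you have written is an accurate diagnosis of why the obvious route --- compose Schaefer's reduction (Proposition \ref{13satnpc}) with the assumed {\onethreeSAT} kernel and try to read off a non-trivial {\CNFthreeSAT} kernel --- fails. Your parameter accounting is correct: $\R''=\R'+4\K'$, and in the dense regime $\K'=\bigTheta(\R'^{3})$ this blows both the variable bound $f\in\bigO(\R')$ and the encoding bound $\bigO(\R'^{3-\epsilon'})$ out of reach. The paper's own proof of the neighbouring Lemma \ref{lemXSAT} dodges the analogous difficulty by silently restricting to $\K\leq\R$ (justified earlier via the reduction chain from {\CNFthreeSAT}) and invoking subadditivity to get $f(\R+4\K)\leq 5f(\R)$; but that trick is unavailable here, since the whole point of the {\CNFthreeSAT} lower bound is the $\K=\bigTheta(\R^{3})$ regime.

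You are also right about where the real content lies. The argument in \cite{jansen16}, following \cite{dell14}, does not go through naive kernel composition at all: it works in the oracle-communication / cross-composition framework, which is robust under exactly the kind of polynomial parameter blow-up that wrecks your composition. So your proposal correctly locates the obstacle and correctly names the remedy, but stops short of carrying it out; what remains is essentially the entire proof, and for that one must either reproduce the cross-composition argument or, as the paper does, simply cite it.
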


\vfill
\noindent
\begin{lemma}
\label{lemXSAT}
If {\onethreeSATplus} admits a 
non-trivial kernel,
then 
{\onethreeSAT} admits a
non-trivial kernel.
\end{lemma}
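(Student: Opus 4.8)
The plan is to argue by composition of reductions. The machinery is already in place: Proposition~\ref{13satplusnpc} supplies a polynomial-time parsimonious reduction from {\onethreeSAT} to {\onethreeSATplus}, and the hypothesis supplies a non-trivial kernel for {\onethreeSATplus}. Chaining the two should yield a non-trivial kernel for {\onethreeSAT}, provided the parameter blow-up of the first reduction is only linear, which it is precisely on the instances to which we have restricted attention.

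First I would fix notation for the hypothesis. By a non-trivial kernel for {\onethreeSATplus} I mean, in exact analogy with the definition given for {\onethreeSAT}, a polynomial-time map $\mathcal{K}$ sending every {\onethreeSATplus} instance $\hat{\formula}\in\formulas(\R',\K')$ to an instance $(f(\R'),g(\K'))$ of some fixed {\NPC} csp $M$, with $f(\R')\in\bigO(\R')$ and $g(\K')\leq h(\K',\R')\in\bigO((\R')^{2-\epsilon})$ for some $\epsilon>0$ and an encoding $h$. By the Remark following Proposition~\ref{13satplusnpc} we treat only the regime $\R\geq\K$ of {\onethreeSAT}, since, being the images of {\CNFthreeSAT} instances under Proposition~\ref{13satnpc}, these are exactly the {\NPC} instances of interest.

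Next, given a {\onethreeSAT} instance $\formula\in\formulas(\R,\K)$ with $\R\geq\K$, apply the reduction of Proposition~\ref{13satplusnpc} to obtain a {\onethreeSATplus} instance $\hat{\formula}\in\formulas(\R',\K')$. This reduction runs in polynomial time, is parsimonious (hence in particular satisfiability-preserving), and from its construction yields parameters $\R'\in\bigO(\R+\K)$ and $\K'\in\bigO(\K)$; the restriction $\K\leq\R$ then forces $\R'\in\bigO(\R)$. I would then define the candidate kernel for {\onethreeSAT} to be $\mathcal{K}$ applied to $\hat{\formula}$, which sends $\formula$ to the instance $(f(\R'),g(\K'))$ of the same {\NPC} csp $M$.

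Finally I would verify the three defining properties of a non-trivial kernel for this composite. Correctness and polynomial running time follow because both stages are polynomial-time and membership-preserving, so their composition is as well. For the size bounds, $\R'\in\bigO(\R)$ yields $f(\R')\in\bigO(\R')\subseteq\bigO(\R)$ for the parameter, while $g(\K')\leq h(\K',\R')\in\bigO((\R')^{2-\epsilon})=\bigO(\R^{2-\epsilon})$ for the encoded size, with the very same $\epsilon>0$. Thus the composite is a non-trivial kernel for {\onethreeSAT}. I expect the only genuine subtlety---hence the step to guard most carefully---to be the parameter bound $\R'\in\bigO(\R)$: it relies entirely on the restriction $\K\leq\R$, without which the $3\K$ contribution to $\R'$ could dominate and, when re-expressed in $\R$, push the exponent $2-\epsilon$ back up toward $2$, breaking non-triviality. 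Invoking the Remark after Proposition~\ref{13satplusnpc} to legitimately confine attention to $\R\geq\K$ is therefore the crux of the argument.
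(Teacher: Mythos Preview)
Your proposal is correct and follows essentially the same route as the paper: compose the polynomial-time reduction {\onethreeSAT} $\to$ {\onethreeSATplus} with the assumed non-trivial kernel, and check that the linear blow-up in variables preserves the $\bigO(\R^{2-\epsilon})$ bound. The paper phrases the size estimates via the subadditivity and scalability of $f$ and $g$ (declared in the Notation section) rather than direct big-$\bigO$ manipulation, and it is less explicit than you are about invoking the restriction $\K\leq\R$, but the substance is the same.
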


\begin{proof}
Let $\formula\in\formulas(\R,\K)$ 
be an instance of 
{\onethreeSAT}.
By Schaeffer's results it follows
$\formula$ can be parsimoniously
polynomial time reduced to a 
{\onethreeSATplus} formula 
$\bar{\formula}\in\formulas(\R',\K')$
with $\R'= \R + 4\K$ and $\K'= 3\K$.

\newpage
\vfill
\noindent
Assuming {\onethreeSATplus} admits a
non-trivial kernel, 
this implies 
{\onethreeSAT} admits a non-trivial kernel,
and therefore through Lemma \ref{dellslemma}
$\coNP\subseteq\NP\setminus\Poly$. 

\vfill
\noindent
To spell this out, suppose we have
non-trivial kernel $(f(\R'),g(\K'))$ for
the problem {\onethreeSAT}, with
$g(\K')\leq h(\K',\R')$
and $h(\K',\R')\in\bigO(\R'^{2-\epsilon})$.
We observe
using the reduction from {\onethreeSAT}, 
$f(\R + 4\K)\leq 
f(\R) + 4f(\K)\leq 5f(\R)$ and therefore 
$f(\R')\in\bigO(\R)$
and, we obtain via the reduction the existence of 
a non-trivial kernel for 
{\onethreeSAT}, that is
$g(3\K)\leq 3g(\K)\leq 3h(\K,\R)$
with
$
h(\K,\R)\in\bigO(\R^{2-\epsilon})
$. 
\end{proof}

\vfill
\noindent
Essentially the following result is a 
restatement of Corollary \ref{mainresult}. 
\begin{theorem}
\label{thmain}
{\onethreeSATplus} admits a 
non-trivial kernel. 
\end{theorem}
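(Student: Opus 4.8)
The plan is to read off the desired kernelization directly from the machinery of Sections 5--8, with the target problem taken to be {\ksubset}. By analogy with the definition of a non-trivial kernel for {\onethreeSAT}, I would first fix what must be produced: a polynomial-time reduction of an instance $\formula\in\formulas(\R,\K)$ to an instance of an {\NPC} csp $M$ with parameter $f(\R)\in\bigO(\R)$ and encoding size $g(\K)\le h(\K,\R)$ where $h(\K,\R)\in\bigO(\R^{2-\epsilon})$ for some $\epsilon>0$. I would take $M=\ksubset$: it is a constraint satisfaction problem whose constraints are the linear equalities $\sum_j s_i(j)T(j)=q_i$, and it is {\NPC} since $0/1$ integer programming with equality already encodes subset-sum. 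Thus {\ksubset} is a legitimate choice of the arbitrary {\NPC} csp demanded by the definition.

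Next I would invoke the reduction \onethreeSATplus$\polyred$\ksubset established in the corollary following Lemma \ref{lemmaalgor}: Gauss--Jordan elimination on $\system(\formula)$ (equivalently the substitution method of Section 6) runs in polynomial time and outputs a {\ksubset} instance whose variables are the $\freed(\formula)$ free columns and whose constraints are the $\degr(\formula)$ pivot rows. The biconditional required of a kernelization---that the source instance is a yes-instance iff the produced instance is---is exactly Corollary \ref{corsys}, which equates satisfying assignments of $\formula$ with $0/1$ solutions of $\system(\formula)$; since that corollary counts solutions, the same transformation is parsimonious, in line with the remark that the theorem restates Corollary \ref{mainresult}.

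It then remains to verify the two size conditions. For the parameter, the {\ksubset} instance carries $\freed(\formula)=\R-\degr(\formula)$ variables, so $f(\R)=\freed(\formula)\le\R\in\bigO(\R)$ already; the sharper headline bound $f(\R)\le 2/3\,\R$ follows from the ratio floor $\K\ge\R/3$ of Section 4 under the worst-case analysis $\degr(\formula)=\R/3$ of Section 5. For the encoding, I would let $h(\K,\R)$ be the representation size studied in Section 8: the representation theorem bounds it above by $\R^2\times\log(1.62)$, and Remark \ref{repupperbound} rewrites this as $\R^{2-\epsilon}$, giving $g(\K)\le h(\K,\R)\in\bigO(\R^{2-\epsilon})$. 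Assembling the reduction with these two bounds yields a kernelization into an {\NPC} csp with linear parameter and subquadratic encoding, i.e. a non-trivial kernel.

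The step I expect to be the genuine obstacle is the last one, specifically the status of $\epsilon$. The definition asks for \emph{some} $\epsilon>0$, and the force of Lemma \ref{lemJanPiet}---and of the eventual collapse through Lemma \ref{lemXSAT}---requires $\epsilon$ to be a constant independent of the instance, so that the encoding is genuinely $o(\R^2)$. But Remark \ref{repupperbound} supplies only $\epsilon=0.52/\log(\R)$, which tends to $0$ as $\R\to\infty$; since $\R^{-0.52/\log\R}$ is a constant, one has $\R^{2-\epsilon}=\bigTheta(\R^2)$, a mere constant-factor saving over the trivial $\bigO(\R^2)$ representation already known to Jansen and Pieterse. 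The crux of the proof is therefore whether the Fibonacci-growth analysis of Section 8 can be pushed to yield a fixed exponent $2-\epsilon$ with $\epsilon>0$ independent of $\R$, rather than an $\R$-dependent one; it is precisely in separating a genuinely subquadratic representation from a $\bigTheta(\R^2)$ one that I expect the real difficulty to lie.
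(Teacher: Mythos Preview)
Your proposal follows precisely the same route as the paper's proof: reduce $\formula$ to an instance of {\ksubset} via Gauss--Jordan elimination (equivalently, substitution), note that the target instance has at most $2/3\,\R$ variables and at most $\R$ constraints, and then invoke Remark~\ref{repupperbound} to claim an encoding of size $\R^{2-\epsilon}$. The paper's proof does nothing beyond this; it stores the reduced instance in a $(2/3\,\R+1)\times\R$ matrix with polynomially bounded entries and then appeals to Remark~\ref{repupperbound} for the bit bound.

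The gap you flag in your final paragraph is genuine, and the paper does not close it. The definition of a non-trivial kernel demands $h(\K,\R)\in\bigO(\R^{2-\epsilon})$ for some fixed $\epsilon>0$, whereas Remark~\ref{repupperbound} only yields $\epsilon=0.52/\log(\R)$, so that $\R^{2-\epsilon}=\R^{2}\cdot\log(1.62)=\bigTheta(\R^{2})$. The paper's proof simply writes ``for some non-negative $\epsilon$'' at the point where a strictly positive, instance-independent $\epsilon$ is required, and offers no further argument. Your diagnosis that the Fibonacci analysis of Section~8 delivers only a constant-factor improvement over the $\bigO(\R^{2})$ encoding already known to Jansen and Pieterse is correct; the paper provides no mechanism by which a uniform $\epsilon>0$ is obtained. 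In short, you have reproduced the paper's argument faithfully and identified exactly the step at which it fails to meet its own definition; the paper itself does not supply the missing piece.
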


\begin{proof}
Follows from
Lemma \ref{mainresult}.
The first player preprocesses 
the input in polynomial time using
Substitution, and passes the input
to the second player which makes use of
its unbounded resources to provide a solution
to this kernel.

\vfill
\noindent
It remains to show the cost of this 
computation is bounded non-trivially, i.e.
$h(\K,\R)\in\bigO(\R^{2-\epsilon})$
for $\epsilon > 0$.

\vfill
\noindent
This requirement
follows 
from Lemma \ref{mainresult}.
For the instance of {\ksubset} to which we reduce
has at most $f(\R')\leq 2/3\R$ variables and at most
$g(\K')\leq\R$ clauses.

\vfill
\noindent
We store the resulting instance
of {\ksubset} in a $(2/3\R + 1)\times\R$ matrix $M$
with polynomial-bounded entries, such that
$M(i,j) = d$ iff $d$  is the coefficient of
variable $i$ in constraint $j$, to which we add 
the result column.

\vfill
\noindent
From Remark \ref{repupperbound}
we obtain indeed that the bit representation
of this kernel is indeed $\R^{2-\epsilon}$
for some non-negative $\epsilon$.   
\end{proof}

\vfill
\begin{corollary}
$\coNP\subseteq\NP\setminus \Poly$
\end{corollary}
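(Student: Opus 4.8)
The plan is to obtain the conclusion purely by composing the three implications already established, reading the Dell--Melkebeek and Jansen--Pieterse lower-bound results as forward implications rather than as obstructions. First I would invoke Theorem~\ref{thmain} to fix, once and for all, a non-trivial kernel for {\onethreeSATplus}: a polynomial-time parsimonious reduction sending any $\formula\in\formulas(\R,\K)$ to a {\ksubset} instance whose encoding has size $\bigO(\R^{2-\epsilon})$ for some $\epsilon>0$. Then I would feed this into Lemma~\ref{lemXSAT}, whose argument transports the kernel across Schaefer's reduction $\formula\mapsto\bar\formula$ with $\R'=\R+4\K$ and $\K'=3\K$; here the subadditivity and scalability of $f$ and $g$ guarantee that both the parameter bound $f(\R')\in\bigO(\R)$ and the size bound are preserved, yielding a non-trivial kernel for {\onethreeSAT}.

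Next I would apply Lemma~\ref{lemJanPiet} directly: the existence of a non-trivial kernel for {\onethreeSAT} forces $\coNP\subseteq\NP\setminus\Poly$. Equivalently, one could route through the chain $\CNFthreeSAT\to\onethreeSAT\to\onethreeSATplus$ and close with Lemma~\ref{dellslemma}, since a non-trivial kernel for {\onethreeSAT} pulls back along Schaefer's reduction to one for {\CNFthreeSAT}. Because each arrow in the composition (Theorem~\ref{thmain} $\Rightarrow$ Lemma~\ref{lemXSAT} $\Rightarrow$ Lemma~\ref{lemJanPiet}) is already proved, the corollary is then a one-line consequence and no fresh calculation is required at this stage.

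The step I expect to carry all the weight is not the chaining but the input Theorem~\ref{thmain}, and specifically the quantifier on $\epsilon$. The definition of a non-trivial kernel demands a \emph{fixed} constant $\epsilon>0$ with encoding size $\bigO(\R^{2-\epsilon})$, whereas Remark~\ref{repupperbound} only supplies $\epsilon=0.52/\log(\R)$, which decays to $0$ as $\R\to\infty$ and hence does not by itself certify an $\R^{2-\epsilon}$ bound \emph{uniform} in $\R$. Thus, before asserting the corollary, I would want to re-examine whether the representation size emerging from the substitution analysis (Remark~\ref{remlargestexp} together with the subsequent theorem bounding the size $S$ between $\R\log(2/3\,\R)$ and $\R^2\log(1.62)$) can genuinely be pushed below $\R^{2-\epsilon}$ for a constant $\epsilon$ independent of $\R$. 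Once that uniform bound is secured, the present corollary follows immediately from the assumed lemmas; absent it, the composition goes through formally but the size hypothesis of Lemma~\ref{lemJanPiet} is not actually met.
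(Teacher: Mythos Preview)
Your approach matches the paper's exactly: its proof of the corollary is a one-line citation of Lemma~\ref{lemXSAT}, Theorem~\ref{thmain}, and Lemma~\ref{lemJanPiet}, which is precisely the chain you propose. Your additional observation that Remark~\ref{repupperbound} only delivers $\epsilon = 0.52/\log(\R)$, and hence no uniform $\R^{2-\epsilon}$ bound, is a legitimate concern about the soundness of Theorem~\ref{thmain}, but the paper does not address it.
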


\begin{proof}
Follows from 
Lemma \ref{lemXSAT},
Theorem \ref{thmain} and
Lemma \ref{lemJanPiet}. 
\end{proof}

\newpage
\vfill
\section{Conclusion}

\vfill
\noindent
We have shown  
the mechanism through which
a {\onethreeSATplus}
instance can be 
transformed into an
integer programming version
{\ksubset} instance with
variables at most 
two-thirds of the
number of variables in the
{\onethreeSATplus} instance.

\vfill
\noindent
This was done by a straightforward
preprocessing of the 
{\onethreeSATplus} instance using 
the method of Substitution.

\vfill
\noindent
We manage to
count satisfying assignments to
the {\onethreeSATplus} instance
through a type of brute-force
search on the {\ksubset}
instance.

\vfill
\noindent
The method we have presented before 
in the shape of Gaussian Elimination
gives interesting upper bounds 
on {\onethreeSATplus}, and shows
how instances become harder to solve
with variations on the 
clauses-to-variables ratio. 

\vfill
\noindent
An essential
observation here 
is that in this case 
this ratio
cannot go below $1/3$ 
up to uniqueness of clauses.
This can be easily checked
in polynomial time..

\vfill
\noindent
By reduction from {\CNFthreeSAT}
any instance of {\onethreeSAT} in which
the number of clauses does not exceed
the number of variables is also 
{\NPC}.

\vfill
\noindent
Our contribution is in pointing out how the method of Substitution
together with a type of brute-force
approach suffice to find, constructively, a non-trivial kernel
for {\onethreeSATplus}.

\newpage
\vfill
\section*{Acknowledgments}

\vfill
\vfill
\noindent
Foremost thanks are due to Igor Potapov
for his support and benevolence shown towards this project. 

\vfill
\noindent
Most of the ideas presented here 
have crystallized while
the author was studying with Rod Downey 
at Victoria University of Wellington, 
in the New Zealand winter of 2010.

\vfill
\noindent
This work would have been much harder to write
without the kind hospitality of Gernot Salzer at TU Wien in 2013. 
There I have met and discussed with experts in the field 
such as Miki Hermann from 
École Politechnique. 

\vfill
\noindent
I was fortunate enough to attend
at TU Wien 
the outstanding exposition
in Computational Complexity 
delivered by 
Reinhard Pichler.  

\vfill
\noindent 
I am indebted to
Noam Greenberg for supervising my
Master of Science Dissertation in 2012. 

\vfill
\noindent
I thank Asher Kach, Dan Turetzky and
David Diamondstone for many useful thoughts
on Computability, Complexity and Model Theory.

\vfill
\noindent
I have also found useful
Dillon Mayhew's insights in Combinatorics, and 
Cristian Calude's research
on Algorithmic
Information Theory.

\vfill
\noindent
Exceptional logicians such as
Rob Goldblatt, Max Cresswell and
Ed Mares have also supervised various projects in which I was involved.

\vfill
\noindent
I further thank
Mark Reynolds and Tim French
from The University of Western Australia
for teaching me to think, and act under pressure.

\vfill
\noindent
Special acknowledgments are given to my colleague 
Reino Niskanen for useful comments and
proof reading an initial compressed version of this manuscript. 

\vfill\vfill\vfill
\hfill
Bucharest, June 2019
\vfill\vfill\vfill
\vfill\vfill\vfill
\vfill\vfill\vfill

\bibliography{mybibfile}

\begin{thebibliography}{10}
\expandafter\ifx\csname url\endcsname\relax
  \def\url#1{\texttt{#1}}\fi
\expandafter\ifx\csname urlprefix\endcsname\relax\def\urlprefix{URL }\fi
\expandafter\ifx\csname href\endcsname\relax
  \def\href#1#2{#2} \def\path#1{#1}\fi

\bibitem{cook71}
S.~A. Cook, The complexity of theorem-proving procedures, in: Proceedings of
  the third annual ACM symposium on Theory of computing, ACM, 1971, pp.
  151--158.

\bibitem{karp72}
R.~M. Karp, Reducibility among combinatorial problems, in: Complexity of
  computer computations, Springer, 1972, pp. 85--103.

\bibitem{levin73}
L.~A. Levin, Universal sequential search problems, Problemy Peredachi
  Informatsii 9~(3) (1973) 115--116.

\bibitem{schaefer78}
T.~J. Schaefer, The complexity of satisfiability problems, in: Proceedings of
  the tenth annual ACM symposium on Theory of computing, ACM, 1978, pp.
  216--226.

\bibitem{toda91}
S.~Toda, Pp is as hard as the polynomial-time hierarchy, SIAM Journal on
  Computing 20~(5) (1991) 865--877.

\bibitem{valiant85}
L.~G. Valiant, V.~V. Vazirani, Np is as easy as detecting unique solutions, in:
  Proceedings of the seventeenth annual ACM symposium on Theory of computing,
  ACM, 1985, pp. 458--463.

\bibitem{dahllof04}
V.~Dahll{\"o}f, P.~Jonsson, R.~Beigel, Algorithms for four variants of the
  exact satisfiability problem, Theoretical Computer Science 320~(2-3) (2004)
  373--394.

\bibitem{bjorklund08}
A.~Bj{\"o}rklund, T.~Husfeldt, Exact algorithms for exact satisfiability and
  number of perfect matchings, Algorithmica 52~(2) (2008) 226--249.

\bibitem{soos10}
M.~Soos, Enhanced gaussian elimination in dpll-based sat solvers., in: POS@
  SAT, 2010, pp. 2--14.

\bibitem{wahlstrom13}
M.~Wahlstr{\"o}m, Abusing the tutte matrix: An algebraic instance compression
  for the k-set-cycle problem, arXiv preprint arXiv:1301.1517 (2013).

\bibitem{giannopoulou16}
A.~C. Giannopoulou, D.~Lokshtanov, S.~Saurabh, O.~Suchy, Tree deletion set has
  a polynomial kernel but no opt\^{}o(1) approximation, SIAM Journal on
  Discrete Mathematics 30~(3) (2016) 1371--1384.

\bibitem{dell14}
H.~Dell, D.~Van~Melkebeek, Satisfiability allows no nontrivial sparsification
  unless the polynomial-time hierarchy collapses, Journal of the ACM (JACM)
  61~(4) (2014) 23.

\bibitem{jansen16}
B.~M. Jansen, A.~Pieterse, Optimal sparsification for some binary csps using
  low-degree polynomials, arXiv preprint arXiv:1606.03233 (2016).

\bibitem{jansen17}
B.~M. Jansen, A.~Pieterse, Sparsification upper and lower bounds for graph
  problems and not-all-equal sat, Algorithmica 79~(1) (2017) 3--28.

\bibitem{ding15}
J.~Ding, A.~Sly, N.~Sun, Proof of the satisfiability conjecture for large k,
  in: Proceedings of the forty-seventh annual ACM symposium on Theory of
  computing, ACM, 2015, pp. 59--68.

\bibitem{downey16}
R.~G. Downey, M.~R. Fellows, Fundamentals of parameterized complexity, Vol.
  201, Springer, 2016.

\bibitem{valiant79}
L.~G. Valiant, The complexity of computing the permanent, Theoretical computer
  science 8~(2) (1979) 189--201.

\bibitem{garey02}
M.~R. Garey, D.~S. Johnson, Computers and intractability, W.H. Freeman, New
  York, 1979.

\bibitem{schroeppel81}
R.~Schroeppel, A.~Shamir, A t=o(2\^{}n/2), s=o(2\^{}n/4) algorithm for certain
  np-complete problems, SIAM journal on Computing 10~(3) (1981) 456--464.

\end{thebibliography}
\vfill
\end{document}